\documentclass[a4paper,11pt]{article}

\usepackage{comment}
\usepackage[textsize=tiny,textwidth=2cm,shadow]{todonotes}

\usepackage{typearea}
\typearea{12}

   \usepackage{url,amsmath,amsthm,amssymb}
   \usepackage{color}
   \usepackage{tikz,framed}
   \usepackage[textwidth=2cm]{todonotes}
 
\usepackage[linesnumbered,ruled,lined,boxed]{algorithm2e}
\usepackage{amsmath,amsthm,amssymb}
\usepackage{tikz,framed}
\usepackage[linesnumbered,ruled,lined,boxed]{algorithm2e}
\newtheorem{theorem}{Theorem}[section]
\newtheorem{lemma}{Lemma}[section]
\newtheorem{corollary}{Corollary}[section]

\newtheorem{proposition}{Proposition}[section]

\newtheorem{example}{Example}[section]

\newtheorem{remark}{Remark}[section]

\newcommand{\N}{\mathbb{N}}

\DeclareMathOperator*{\argmax}{arg\,max}

\newcommand{\R}{\mathbb{R}}
\newcommand{\T}{\mathcal{T}}
\renewcommand{\L}{\mathcal{L}}

\title{Toll Caps in Privatized Road Networks}

\author{Tobias Harks\thanks{University of Augsburg, Augsburg, Germany, tobias.harks@math.uni-augsburg.de}
 \and
Marc Schr\"{o}der
\thanks{%
  RWTH Aachen University, Aachen, Germany, marc.schroeder@oms.rwth-aachen.de}
   \and Dries Vermeulen
\thanks{
  Maastricht University, Maastricht, The Netherlands, d.vermeulen@maastrichtuniversity.nl}
}

\begin{document}
\maketitle
\begin{abstract}
We consider a network pricing game on a parallel network with congestion effects in which link owners set tolls for travel so as to maximize profit. A central authority is able to regulate this competition by means of a (uniform) price cap. The first question we want to answer is how such a cap should be designed in order to minimize the total congestion. We provide an algorithm that finds an optimal price cap for networks with affine latency functions and a full support Wardrop equilibrium. Second, we consider the induced network performance at an optimal price cap. We show that for two link networks with affine latency functions, the congestion costs at the optimal price cap are at most $8/7$ times the optimal congestion costs. For more general latency functions, this bound goes up to 2 under the assumption that an uncapped Nash equilibrium exists. However, in general such an equilibrium need not exist and this can be used to show that optimal price caps can induce arbitrarily inefficient flows.
\end{abstract}
\medskip

\noindent \emph{ Keywords:}\ Game theory, competition regulation, toll caps, Nash equilibrium, Wardrop equilibrium.

\section{Introduction}
With the ongoing privatization of public road infrastructure,
toll charging on highways and roads is common practice
in many cities around the world (see e.g. Bergen~\cite{Bergen07},
London~\cite{London08}, Santiago de Chile~\cite{Santiago16}, Singapore~\cite{Singapur10}, and Stockholm~\cite{Stockholm07}). 
The toll market in the United States (starting already in the 18th century)
is built on the key idea that private firms obtain the right to construct the infrastructure (usually via an auction, see Porter and Zona~\cite{Porter93})
and as compensation are allowed to charge tolls for road usage.
Firms are further obligated to reinvest parts of the revenues to maintain the infrastructure.
To date about 35 States use this mechanism and while it has worked well in some cases, several problems arose in the past and still prevail. We report here one incident. In 2012,
large toll rate increases have been implemented by the Port Authority of New York and New Jersey (justified in part to finance its World Trade Center project).
In response, a bill was introduced in Congress that would allow the Secretary of Transportation to \emph{regulate} tolls on every bridge on the country's interstates and other federally aided highways. 

If a regulation authority introduces toll caps, how should this cap be designed
in order to induce a socially beneficial outcome and how does this outcome perform?
This question is at the core of our paper. 
We model the situation using a three-level optimization model, where
in the lowest (third) level, there are commuters that want to travel
from an origin to a destination and each commuter
minimizes a combination of latency costs plus toll costs.
We model the route choice of a fixed volume of commuters using the concept of Wardrop equilibrium~\cite{Wa52}.
We further assume that the underlying network consists of $n\geq2$ parallel edges connecting a common
origin with a common destination. While this assumption is indeed restrictive, it
still models relevant situations, for instance, when there are parallel access roads (e.g.
 tollable bridges) connecting the central district of a city with a suburb, and it is used in recent related work (e.g. Acemoglu and Ozdaglar~\cite{AcOz07}, and Wan~\cite{Wan16}).
 
In the middle (second) layer, there are firms owning the edges and charging tolls on them. Each firm maximizes revenue, but faces competition from other
firms. We assume that each firm owns only one edge. A \emph{Nash equilibrium} is a vector of tolls, one toll value per firm, so that no firm can improve by unilaterally
deviating to another toll value. A Nash equilibrium takes into account that commuters will (potentially) change their route choices once a firm deviates to a different toll.\footnote{In game theory this equilibrium concept is called subgame perfect Nash equilibrium, but given that the focus of this paper is on the pricing game, we will call it a Nash equilibrium.}

The authority in the first and highest level can decide on some
toll cap. We assume that the authority is not allowed to discriminate between the different firms and puts one (uniform) price cap applicable to all firms. We consider the situation in which the authority is purely
interested in minimizing the resulting total latency costs (clearly
other objectives may be feasible as well).

\subsection{Our Results}
We study a three-level optimization problem in which a central authority can impose a (uniform) price cap so as to minimize the total induced equilibrium congestion costs. We motivate this form of competition regulation by showing that without any price regulation, the induced equilibrium congestion costs can be arbitrarily higher than the optimal congestion costs because of too high prices of one of the firms. 

For solving the three-level optimization problem, we first derive a set of structural results for instances with affine latency functions. In particular, we show that a Nash equilibrium exists for all price caps $c$, and is in fact unique whenever the Wardrop equilibrium has full support. 
Moreover, under the full support assumption, we derive a complete characterization
of  Nash equilibria using the KKT conditions of the underlying optimization
problems of the involved firms.
Based on this characterization, we devise an algorithm that finds the optimal price cap in polynomial time. The algorithm computes  breakpoints at which the set of firms putting their price equal to the cap changes. We show that these sets only increase (with decreasing cap) so that there are at most $n$  breakpoints, where $n$ denotes the number of firms. The breakpoints essentially divide the real line into at most $n$ intervals and for each of these intervals we show that the cost function is quadratic. Thus, the algorithm needs to solve at most $n$ quadratic one-dimensional minimization problems, which can be done in polynomial time.

We then consider the performance of an optimal price cap. Given that a Nash equilibrium always exists for networks with affine latency functions, we show a tight bound of $8/7$ between the congestion costs at an optimal price cap and the optimal congestion costs for duopoly instances. For more general latency functions, this bound goes up to 2 if we assume that an uncapped Nash equilibrium exists.

In general, however, the objective function of a firm in the second level is not concave, the set of best replies is not convex and therefore standard techniques for proving existence of a Nash equilibrium fail. In fact, we give an example in which an uncapped subgame perfect Nash equilibrium does not exist. Moreover, we are able to use the nonexistence of uncapped Nash equilibria to construct a sequence of instances in which the ratio between the congestion costs at the optimal price cap and the optimal congestion costs goes to infinity.

\subsection{Related Work}
The inefficiency of selfish behavior in congestion games has been well recognized since the work of Pigou~\cite{Pi20} in economics and Wardrop~\cite{Wa52} in transportation networks. A natural way to quantify this inefficiency is the price of anarchy as defined by Koutsoupias and Papdimitrou~\cite{KoPa99}. For routing games this is first done by Roughgarden and Tardos~\cite{RoTa02} and later by Correa et al.~\cite{CorSS04}.

One way to restore efficiency is by means of centralized pricing. Beckmann et al.~\cite{BeMcWi56} and Dafermos and Sparrow~\cite{DaSp69} showed that marginal cost pricing induces an equilibrium flow that is optimal. Despite the effectiveness of marginal cost tolls, there are two drawbacks. First, potentially each edge in the network is tolled; an issue that is considered by Hoefer et al.~\cite{HoOlSk08}
and Harks et al.~\cite{HKKM2011}. Second, the imposed tolls can be arbitrary large; an issue that is considered by Bonifaci et al.~\cite{BoSaSc11}, Fotakis et al.~\cite{FoKaLi15}, Jelinek et al.~\cite{JeKlSc14} and Kleer and Sch\"{a}fer~\cite{KlSc16}. Cole et al.~\cite{CoDoRo03} show that optimal tolls also exist when users are heterogeneous with respect to the tradeoff between time and money. Later this result was extended to general networks by Fleischer et al.~\cite{flJaMa04}, Karakostas and Kolliopoulos~\cite{KaKo04} and Yang and Huang~\cite{YaHu04}.

Instead of improving efficiency, tolls can also be used to minimize or maximize the profit of one or multiple leaders. The model that computes tolls that induce the optimal flow at minimal
profit was analyzed by Dial~\cite{DialpartI,DialpartII}. The model with one profit maximizing leader and no congestion effects, was first analyzed by Labb\'{e} et al.~\cite{LaMaSa98} and later by Briest et al.~\cite{BrHoKr12}.

Acemoglu and Ozdaglar~\cite{AcOz07} introduced a model of price competition between link owners in a parallel network that is very similar to ours. The main difference is that in their model users have a reservation value for travel. This implies that if links become too expensive users choose not to travel, whereas in our model users always travel through the network. Their main result is a (tight) bound on the inefficiency of equilibria. 
Later several generalizations of the model were introduced. The follow-up work of Acemoglu and Ozdaglar~\cite{AcOz:07} allows for a slightly more general topology, namely parallel paths with multiple links, and show that equilibria can be arbitrarily inefficient. Hayrapetyan et al.~\cite{HaTaWe07} consider the model with elastic traffic demand. Their bounds on inefficiency, however, are not tight, which is improved upon by Ozdaglar~\cite{Oz08}.

Recently, following our question on competition regulation of price competition between edge owners, Correa et al.~\cite{CoGuLiNiSc18} considered the setting in which a central authority is allowed to put different toll caps on different edges of the network. Their main result is that for all network topologies there are toll caps so that firms are willing to put their toll equal to the cap, and the induced equilibrium flow is the optimal flow. An example of such caps are marginal tolls as introduced by Beckmann et al.~\cite{BeMcWi56}. Notice that the assumption of individual toll caps is important. In practice, however, toll cap discrimination is often not allowed, which motivates our work on uniform toll caps. 

A different extension was proposed by Johari et al.~\cite{Jo10}. They extend the analysis of price competition by including investment and entry decisions. The model of Xiao et al.~\cite{XiYaHa07} studies competition in both tolls and capacities and finds that tolls are higher, but capacities are lower than socially desired. Other recent models of Bertrand competition in a network setting that use different ways of modelling congestion effects are Anshelevich and Sekar~\cite{AnSe15}, Chawla and Niu~\cite{ChNi09}, Chawla and Roughgarden~\cite{ChRo08}, Papadimitriou and Valiant~\cite{PaVa10} and Caragiannis et al.~\cite{CaChKaKrPrVo17}.

\section{The model}
An instance of the three-level optimization problem is
given by the tuple $I=(N,(\ell_i)_{i\in N})$, where  $N=\{1,\ldots,n\}$, with $n\geq 2$, is a set of parallel
links connecting a common source with a common destination. There is one unit of flow 
to be sent over the links. Let $x_i\in\R_+$ be the total flow on link $i$ and $x=(x_1,\ldots,x_n)$ be the vector of flows. Each link has a flow-dependent latency function $\ell_i(x_i)$ that we assume to be strictly increasing, convex and continuously differentiable. We denote this class of functions by $\L_c$. By $\L_d$ we denote the class of strictly increasing polynomial latency functions with nonnegative coefficients and degree at most $d$. In particular, $\L_1$ denotes the class of strictly increasing affine latency functions.

Given a flow $x$, define the total latency costs by
$$C(x)=\sum_{i\in N}\ell_i(x_i)\cdot x_i.$$
A flow $x^*$ is \textit{optimal}, if $C(x^*)\leq C(x)$ for all flows $x$.

Let $t_i\in\mathbb{R}_+$ be the toll on link $i$ and $t=(t_1,\ldots,t_n)$ be the vector of tolls. The \textit{effective cost} of a user of link $i$ is $\ell_i(x_i)+t_i$. Given a toll vector $t$, a flow $x$ is a \textit{Wardrop equilibrium for $t$}, if for all $i,j\in N$ with $x_i>0$, $$\ell_i(x_i)+t_i\leq \ell_j(x_j)+t_j.$$

In particular, if $x$ is an equilibrium for $t$, then all links with positive flow have equal effective costs, i.e., there is some $K>0$ with $\ell_i(x_i)+t_i=K$ for all $i\in N$ with $x_i>0$. It is well-known that given our assumptions on the latency functions, an equilibrium for $t$ exists, is unique and can be described by means of the following inequality (see Beckmann et al.~\cite{BeMcWi56} and Dafermos and Sparrow~\cite{DaSp69}).
\begin{lemma}\label{lem:vi}
A flow $x$ is a Wardrop equilibrium for $t$ if and only if for all feasible flows $x'$,
$$\sum_{i\in N}(\ell_i(x_i)+t_i)\cdot(x_i-x'_i)\leq 0.$$
\end{lemma}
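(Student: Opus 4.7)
The plan is to prove both directions of this variational inequality directly from the definition of Wardrop equilibrium, using the fact that feasibility of $x$ and $x'$ implies $\sum_{i\in N}(x_i-x'_i)=0$.

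For the forward direction, I would start by observing (as noted just above the lemma) that if $x$ is a Wardrop equilibrium for $t$, then there exists a common value $K\in\R_+$ with $\ell_i(x_i)+t_i=K$ whenever $x_i>0$, and $\ell_i(x_i)+t_i\geq K$ whenever $x_i=0$ (otherwise some positive-flow link could reduce its effective cost by switching to a cheaper unused link, contradicting the defining inequality). I would then split the sum $\sum_i(\ell_i(x_i)+t_i)(x_i-x'_i)$ into the used and unused links. On used links each summand equals $K(x_i-x'_i)$. On unused links, $x_i=0$ so the summand is $-(\ell_i(x_i)+t_i)x'_i\leq -Kx'_i$ because $x'_i\geq 0$ and $\ell_i(x_i)+t_i\geq K$. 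Adding the two contributions bounds the total by $K\sum_i(x_i-x'_i)=K(1-1)=0$.

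For the reverse direction, I would argue by contrapositive. Suppose $x$ is feasible but not a Wardrop equilibrium. Then there exist links $i,j$ with $x_i>0$ and $\ell_i(x_i)+t_i>\ell_j(x_j)+t_j$. I would construct a feasible perturbation $x'$ by shifting a small amount $\varepsilon>0$ of flow from $i$ to $j$, setting $x'_i=x_i-\varepsilon$, $x'_j=x_j+\varepsilon$, and $x'_k=x_k$ otherwise; this is feasible for $\varepsilon\leq x_i$. A direct evaluation gives
\[
\sum_{k\in N}(\ell_k(x_k)+t_k)(x_k-x'_k)=\varepsilon\bigl((\ell_i(x_i)+t_i)-(\ell_j(x_j)+t_j)\bigr)>0,
\]
contradicting the assumed variational inequality.

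I do not anticipate any real obstacle: both directions are short, and the only subtlety is the standard observation that Wardrop conditions imply the effective cost on every unused link is at least the common value $K$ on used links. Once that is stated explicitly, the forward computation is a one-line split and the reverse direction is a one-line perturbation argument.
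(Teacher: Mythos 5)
Your proof is correct: the paper states this lemma as a well-known fact (citing Beckmann et al.\ and Dafermos--Sparrow) and gives no proof of its own, and your two-direction argument---splitting the sum over used and unused links with the common effective cost $K$ for the forward direction, and the $\varepsilon$-flow-shift perturbation for the contrapositive of the reverse direction---is exactly the standard derivation of this variational inequality. No gaps; both steps go through as written.
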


For each toll vector $t$, we denote by $x(t)$ the unique equilibrium for $t$. The equilibrium for $t=0$ is called the \textit{Wardrop equilibrium}.

\subsection{Nash Equilibrium}
We assume that each link $i\in N$ is owned by a firm (also denoted by $i$)
and that the objective of each firm is to maximize profit. Given a toll vector $t\in\mathbb{R}^N_+$, define $\Pi_i(t_i,t_{-i})=t_i\cdot x_i(t)$ for all $i\in N$. A toll vector $t$ is a \textit{$c$-capped Nash equilibrium}, if for all $i\in N$, $t\leq c$ and for all $0\leq t'_i\leq c$,
$$\Pi_i(t_i,t_{-i})\geq \Pi_i(t'_i,t_{-i}).$$
Observe that the flow adapts to the toll vector $(t'_i,t_{-i})$ in the calculation of $\Pi_i(t'_i,t_{-i})$. Given $t_{-i}\in\mathbb{R}^{N\setminus\{i\}}_+$ and $c\in\mathbb{R}_+$, define $B^c_i(t_{-i})=\argmax\limits_{0\leq t_i\leq c}\Pi_i(t_i,t_{-i})$ and $B^{\infty}_i(t_{-i})=\argmax\limits_{t_i\geq 0}\Pi_i(t_i,t_{-i})$ for all $i\in N$.

Given $c\in\mathbb{R}_+$, we denote by
\[ \T(c)=\{t\in \R_+^n \mid t \text{ is a $c$-capped Nash equilibrium}\}\]
the set of $c$-capped Nash equilibria. By $\T(\infty)$ we denote the set of uncapped Nash equilibria. Note that there might be instances for which $\T(c)$ is not a singleton for some $c\in\R_+$, and for which $\T(c)=\emptyset$ for some $c\in\R_+$.

\subsection{Designing Price Caps}

The following example provides the motivation for our research: in terms of latency costs, a Nash equilibrium can be arbitrarily worse than the optimal flow.
\begin{example}
Consider the network of Figure \ref{fig:bad}.
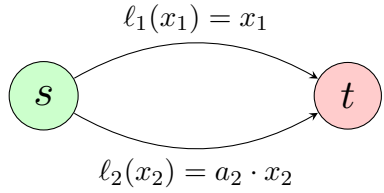
\begin{figure}[ht]
\centering
\begin{tikzpicture}
\node[draw,circle,scale=1.5,fill=green!20] (1) at (0,0) {$s$};
\node[draw,circle,scale=1.4,fill=red!20] (2) at (4,0) {$t$};
\draw[-stealth] (1)[out=30,in=150] to node[above]{$\ell_1(x_1)=x_1$} (2);
\draw[-stealth] (1)[out=-30,in=-150] to node[below]{$\ell_2(x_2)=a_2\cdot x_2$} (2);
\end{tikzpicture}
\caption{An inefficient Nash equilibrium as $a_2\rightarrow \infty$.}
\label{fig:bad}
\end{figure}

Then $$B^{\infty}_1(t_2)=\begin{cases}
\frac{a_2+t_2}{2},&\text{ if }t_2\leq a_2+2,\\
t_2-1,&\text{ if }t_2>a_2+2,
\end{cases}$$ and
$$B^{\infty}_2(t_1)=\begin{cases}
\frac{1+t_1}{2},&\text{ if }t_1\leq 2a_2+1,\\
t_1-a_2,&\text{ if }t_1>2a_2+1.
\end{cases}$$
Combining $B^{\infty}_1(t_2)$ and $B^{\infty}_2(t_1)$, if $t\in\T(\infty)$, then $t=\left(\frac{2a_2+1}{3},\frac{a_2+2}{3}\right)$. Since $x(t)=\left(\frac{2a_2+1}{3a_2+3},\frac{a_2+2}{3a_2+3}\right)$ and $x^*=\left(\frac{a_2}{a_2+1},\frac{1}{a_2+1}\right)$, we have $C(x(t))=\frac{a_2^2+7a_2+1}{9a_2+9}$ and $C(x^*)=\frac{a_2}{a_2+1}$. Hence $\frac{C(x(t))}{C(x^*)}=\frac{a_2^2+7a_2+1}{9a_2}\rightarrow\infty$ as $a_2\rightarrow \infty$. As $a_2\rightarrow \infty$, the price of firm 1 grows faster than the price of firm 2, which pushes too much flow onto link 2, creating inefficiencies.
\end{example}

The bad performance of the Nash equilibrium calls for competition regulation. The regulation policy we focus on is that of a price cap $c\in\R_+$: no firm is allowed to set its toll above the price cap, i.e., $t_i\leq c$ for all $i\in N$.

Formally, we seek to choose $c$ so as to solve the following three-level optimization problem:

\begin{framed}
\begin{equation}\label{eq:problem}
\tag{3L-P}\inf_{c\in\R_+}\sup_{t\in \T(c)} C(x(t)).
\end{equation}
\end{framed}
If $\T(c)=\emptyset$ for some $c\in \R_+$, we assume that $C(x(t))=\infty$. Observe that $\T(0)\neq\emptyset$.

Note that this is a quite appealing and robust formulation: we minimize total costs for the \emph{worst} possible realization
of a Nash equilibrium. It is known, however, that multi-level optimization
problems are notoriously hard in terms
of proving existence and computability of optimal solution.

In the next section, we derive structural results of Nash equilibria 
and their corresponding Wardrop equilibria. These structural results contain
existence and uniqueness of equilibria for networks with affine latencies
and a comparative statics result with respect to different price caps.
These insights form the basis of our polynomial time
algorithm computing an optimal price cap for the case of affine latency functions.

\section{Structural Properties of Nash Equilibria}
Assume that $\ell_i(x_i)=a_i\cdot x_i+b_i$ with $a_i> 0$ and $b_i\geq 0$. The first result shows that for affine latency functions, a Nash equilibrium exists for all price caps. A similar result is obtained for the model in which users have a reservation value for travel and there are no price caps by Acemoglu and Ozdaglar~\cite{AcOz07} and Hayrapetyan et al.~\cite{HaTaWe07}.

\begin{proposition} \label{pro:exi}
Let $\ell_i\in\mathcal{L}_1$ for all $i\in N$. Then, there exists a $c$-capped Nash equilibrium for all $c\in\mathbb{R}_+$.
\end{proposition}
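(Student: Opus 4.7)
The plan is to apply Kakutani's fixed-point theorem to the best-reply correspondence $t \mapsto B^c(t) := B^c_1(t_{-1}) \times \cdots \times B^c_n(t_{-n})$ on the compact convex strategy space $[0,c]^n$. What needs to be shown is that, for each $i$ and each fixed $t_{-i}$, the profit function $\Pi_i(\cdot,t_{-i})$ is continuous and quasi-concave on $[0,c]$. Continuity then yields that $B^c_i(t_{-i})$ is nonempty and upper hemicontinuous via the maximum theorem, and quasi-concavity yields that it is convex; Kakutani then delivers a fixed point, which is by construction a $c$-capped subgame perfect Nash equilibrium. The case $c=0$ is trivial since $t=0$ is the only admissible toll vector.

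To verify these properties I would use the closed-form description of Wardrop equilibria available for affine latencies. For any toll vector $t$ the equilibrium active set $S \subseteq N$ is the set of links with $b_j + t_j < K$, where the common effective cost is $K = \bigl(1 + \sum_{j \in S}(b_j + t_j)/a_j\bigr)/\sum_{j\in S}(1/a_j)$, and $x_j = (K - b_j - t_j)/a_j$ on $S$. Continuity of $x$ in $t$ is immediate from Lemma~\ref{lem:vi}. Fixing $t_{-i}$ and varying $t_i$ over $[0,c]$, $K$ is monotone increasing in $t_i$, so links other than $i$ can only \emph{enter} the active set as $t_i$ grows, while link $i$ itself eventually \emph{leaves} the active set at some threshold $t_i^{\max}$ (possibly $+\infty$) beyond which $\Pi_i \equiv 0$.

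The key structural claim is that $\Pi_i(\cdot, t_{-i})$ is concave on $[0, t_i^{\max}]$, hence quasi-concave on the whole of $[0,c]$. On each piece where the active set $S$ is fixed with $i \in S$, $x_i$ is linear in $t_i$ with slope $(1/a_i)\bigl((1/a_i)/\sum_{j\in S} 1/a_j - 1\bigr) < 0$, and going from $S$ to $S \cup \{k\}$ strictly enlarges the denominator, making the slope strictly more negative. Thus $t_i \mapsto x_i$ is piecewise linear, continuous, decreasing, with non-increasing slope; a short computation shows that the piecewise-quadratic product $\Pi_i(t_i) = t_i\, x_i(t_i)$ inherits the non-increasing slope across breakpoints and is therefore concave on $[0,t_i^{\max}]$. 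Since $\Pi_i(0) = \Pi_i(t_i^{\max}) = 0$ and $\Pi_i$ vanishes beyond $t_i^{\max}$, quasi-concavity on $[0,c]$ follows.

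I expect the main obstacle to be precisely this concavity argument: because the active set depends on $t_i$, $\Pi_i$ is only piecewise rather than globally quadratic, and one must check that the left and right slopes glue compatibly at every breakpoint. This ultimately reduces to the trivial monotonicity $\sum_{j\in S}1/a_j \leq \sum_{j\in S'}1/a_j$ whenever $S \subseteq S'$, combined with the monotonicity of the active set in $t_i$ (which itself follows from the monotonicity of $K$ in $t_i$). Once these structural observations are in place, existence is a routine application of Kakutani.
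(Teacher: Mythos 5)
Your proposal is correct and follows the same overall architecture as the paper's proof: both reduce existence to Kakutani's fixed point theorem applied to the best-reply correspondence on $[0,c]^n$, with upper hemicontinuity coming from the maximum theorem and convex-valuedness coming from the fact that $\Pi_i(\cdot,t_{-i})$ is concave on the interval where link $i$ still carries flow and identically zero afterwards. The one place where you genuinely diverge is the verification of concavity at the breakpoints where the active set changes. The paper computes the one-sided derivatives of $\Pi_i$ directly and establishes the slope inequality \eqref{eq:slo} by an algebraic manipulation that uses the Wardrop indifference condition $b_k+t_k=a_i\cdot x_i(t)+b_i+t_i$ for each entering link $k$. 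You instead observe that $t_i\mapsto x_i(t_i,t_{-i})$ is continuous, piecewise linear, non-increasing and concave (enlarging the active set $S$ only makes the slope $-\bigl(\sum_{j\in S\setminus\{i\}}1/a_j\bigr)\big/\bigl(a_i\sum_{j\in S}1/a_j\bigr)$ more negative), and deduce concavity of $t_i\, x_i(t_i)$ from these properties alone; your ``short computation'' at a breakpoint checks out (if the slope of $x_i$ drops from $-\beta_k$ to $-\beta_{k+1}$ at $t_i^*$, continuity of $x_i$ forces the derivative of $t_i x_i(t_i)$ to jump by $-(\beta_{k+1}-\beta_k)t_i^*\leq 0$), and in fact the general statement that $t\mapsto t\cdot x(t)$ is concave on $\R_+$ whenever $x$ is concave and non-increasing holds without any piecewise-linearity. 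Your route is slightly more conceptual in that it never invokes the indifference condition for the entering links, only the monotonicity of the active set and of the harmonic sums; the paper's computation is more self-contained but less transparent about why the slope inequality holds. Either way the remaining steps (nonemptiness and closedness of best replies, convexity of the argmax of a quasi-concave function, the trivial case $\Pi_i\equiv 0$) are handled identically.
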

\begin{proof}
Assume that $\ell_i(x_i)=a_i\cdot x_i+b_i$ with $a_i> 0$ and $b_i\geq 0$. For all toll vectors $t\in\mathbb{R}^N_+$, define $N(t)=\{i\in N\mid x_i(t)>0\}$.

Since $x(t)$ is an equilibrium for $t$, we have $a_i\cdot x_i(t)+b_i+t_i=K$ for all $i\in N(t)$. With $\sum_{i\in N(t)}x_i(t)=1$, we get
$K=\frac{1+\sum_{j\in N(t)}\frac{b_j+t_j}{a_j}}{\sum_{j\in N(t)}\frac{1}{a_j}}$
and thus
\begin{equation}
x_i(t)=\frac{1+\sum\limits_{j\in N(t)}\frac{b_j+t_j-b_i-t_i}{a_j}}{\sum\limits_{j\in N(t)}\frac{a_i}{a_j}}\text{ for all }i\in N(t). \label{eq:flo}
\end{equation}

By the Theorem of the Maximum \cite{Be63}, it follows that $B^c(t)=(B^c_i(t_{-i}))_{i\in N}$ is upper semi-continuous and hence has a closed graph. We show that $B^c_i(t_{-i})$ is a convex set for all $t_{-i}$.

Assume $t_{-i}\in\mathbb{R}_+^{N\setminus\{i\}}$ and consider the following two cases: (a) $\Pi_i(t_i,t_{-i})=0$ for all $t_i\in\mathbb{R}_+$ and (b) $\Pi_i(t_i,t_{-i})>0$ for some $t_i\in\mathbb{R}_+$.

Case (a): Suppose $\Pi_i(t_i,t_{-i})=0$ for all $t_i\in\mathbb{R}_+$. Obviously $B^c_i(t_{-i})$ is convex.

Case (b): Suppose there is some $t'_i\in\mathbb{R}_+$ with $\Pi_i(t'_i,t_{-i})>0$. We will prove that $\Pi_i(t_i,t_{-i})$ is a concave function in $t_i$ until there is some $\bar{t}_i$ such that $\Pi_i(t_i,t_{-i})=0$ for all $t_i\geq\bar{t}_i$. Observe the following from \eqref{eq:flo}.

\begin{enumerate}
\item[(i)] If $j\in N(t'_i,t_{-i})$, with $j\neq i$, for some $t'_i\in\mathbb{R}_+$, then $j\in N(t_i,t_{-i})$ for all $t_i\geq t'_i$. That is, the set $N(t)$ increases as $t_i$ increases.
\item[(ii)] For a fixed $N(t)$, $\Pi_i(t_i,t_{-i})$ is either linear or quadratic in $t_i$. $\Pi_i(t)$ is linear if and only if $N(t)=\{i\}$, which can only happen for $t_i\in[0,\hat{t}_i]$ for some $\hat{t}_i\in\mathbb{R}_+$. In all other cases, $\Pi_i(t)$ is quadratic with a decreasing slope. 
\end{enumerate}
Now in order to prove that $\Pi_i(t_i,t_{-i})$ is a concave function in $t_i$, it is sufficient to show that the slope decreases at every toll $t_i$ for which new links start to receive flow, i.e. a new set of players joins $N(t_i,t_{-i})$. To this end, let $t_i$ denote a price at which a new set of firms starts to receive flow and let us denote this set by $\bar{N}$. Notice that 
\begin{align*}
\frac{\partial_-\Pi_i(t_i,t_{-i})}{\partial t_i}&=\frac{1+\sum\limits_{j\in N(t):j\neq i}\frac{b_j+t_j-b_i-2t_i}{a_j}}{\sum\limits_{j\in N(t)}\frac{a_i}{a_j}}\\
\frac{\partial_+\Pi_i(t_i,t_{-i})}{\partial t_i}&=\frac{1+\sum\limits_{j\in N(t)\cup \bar{N}:j\neq i}\frac{b_j+t_j-b_i-2t_i}{a_j}}{\sum\limits_{j\in N(t)\cup\bar{N}}\frac{a_i}{a_j}}
\end{align*}
where $\frac{\partial_-\Pi_i(t_i,t_{-i})}{\partial t_i}$ and $\frac{\partial_+\Pi_i(t_i,t_{-i})}{\partial t_i}$ denote the left and right partial derivative with respect to $t_i$, respectively.
So in order to prove that the slope decreases it is sufficient to show that
\begin{equation}\label{eq:slo}
\left(1+\sum\limits_{j\in N(t):j\neq i}\frac{b_j+t_j-b_i-2t_i}{a_j}\right)\cdot\sum\limits_{k\in \bar{N}}\frac{a_i}{a_k}\geq\sum\limits_{k\in \bar{N}}\frac{b_k+t_k-b_i-2t_i}{a_k}\cdot{\sum\limits_{j\in N(t)}\frac{a_i}{a_j}}.
\end{equation}
We get
\begin{align*}
&\sum\limits_{k\in \bar{N}}\frac{b_k+t_k-b_i-2t_i}{a_k}\cdot{\sum\limits_{j\in N(t)}\frac{a_i}{a_j}}\\
&=\sum\limits_{k\in \bar{N}}\frac{a_i\cdot x_i(t)-t_i}{a_k}\cdot{\sum\limits_{j\in N(t)}\frac{a_i}{a_j}}\\
&=\sum\limits_{k\in \bar{N}}\frac{1+\sum\limits_{j\in N(t):j\neq i}\frac{b_j+t_j-b_i-t_i}{a_j}-\sum\limits_{j\in N(t)}\frac{t_i}{a_j}}{a_k\cdot\sum\limits_{j\in N(t)}\frac{1}{a_j}}\cdot\sum_{j\in N(t)}\frac{a_i}{a_j}\\
&=\sum\limits_{k\in \bar{N}}\left(1+\sum\limits_{j\in N(t):j\neq i}\frac{b_j+t_j-b_i-2t_i}{a_j}\right)\cdot\frac{a_i}{a_k}-\frac{t_i}{a_k}\\
&=\left(1+\sum\limits_{j\in N(t):j\neq i}\frac{b_j+t_j-b_i-2t_i}{a_j}-\frac{t_i}{a_i}\right)\cdot\sum\limits_{k\in \bar{N}}\frac{a_i}{a_k},
\end{align*}
where the first equality follows from $b_k+t_k=a_i\cdot x_i(t)+b_i+t_i$ for all $k\in \bar{N}$, the second equality from the definition of $x_i(t)$, and the last two inequalities from rewriting. Since $\frac{t_i}{a_i}>0$, \eqref{eq:slo} holds true. Hence $B^c_i(t_{-i})$ is convex-valued. 

Kakutani's fixed point theorem~\cite{Ka41} now implies that there exists an equilibrium for all price caps $c\in\R_+$.
\end{proof}

The following lemma states a very natural comparative statics result,
showing that a unilateral increase in toll by one firm only
decreases the flow on the corresponding link while the flow
on other links only increases.
\begin{lemma}\label{lem:mon}
Let $i\in N$ and $t=(t_i,t_{-i}), t'=(t'_i,t_{-i})\in \R_+^n$ with $t_i\leq t'_i$. Then $x_i(t)\geq x_i(t')$ and $x_j(t)\leq x_j(t')$ for all $j\neq i$.
\end{lemma}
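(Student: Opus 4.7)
\medskip

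\noindent\textbf{Proof plan.} I would combine the variational characterization of Lemma~\ref{lem:vi} with a direct inspection of the Wardrop conditions on the supports. Throughout write $x=x(t)$, $x'=x(t')$, let $K$ and $K'$ denote the common effective costs of the used links under $t$ and $t'$, and recall that $t_k=t'_k$ for every $k\neq i$ while $t_i\le t'_i$.

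The first step is to establish $x_i\ge x'_i$ via the variational inequality. Apply Lemma~\ref{lem:vi} to $x$ with test flow $x'$, and to $x'$ with test flow $x$, and add the two resulting inequalities. Since both flows are feasible and the tolls agree outside coordinate~$i$, the sum simplifies to
\[
\sum_{k\in N}\bigl(\ell_k(x_k)-\ell_k(x'_k)\bigr)(x_k-x'_k)+(t_i-t'_i)(x_i-x'_i)\le 0.
\]
Strict monotonicity of each $\ell_k$ makes the first sum nonnegative, so $(t_i-t'_i)(x_i-x'_i)\le 0$, and $t_i\le t'_i$ forces $x_i\ge x'_i$.

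For the second claim I would proceed by contradiction: assume $x_j>x'_j$ for some $j\neq i$, and derive $K>K'$. Since $x_j>0$, link $j$ is used in $x$, so $\ell_j(x_j)+t_j=K$. If $x'_j>0$ as well, the analogous identity at $t'$ together with $t_j=t'_j$ gives $\ell_j(x_j)-\ell_j(x'_j)=K-K'$, so strict monotonicity and $x_j>x'_j$ force $K>K'$. If instead $x'_j=0$, the Wardrop condition at $t'$ yields $\ell_j(0)+t'_j\ge K'$, and combining with $\ell_j(x_j)+t_j=K$ and $t_j=t'_j$ gives $\ell_j(x_j)-\ell_j(0)\le K-K'$; since the left side is strictly positive, again $K>K'$.

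The final step is to show that $K>K'$ propagates to every other link: $x_k\ge x'_k$ for all $k\neq i$. I would again split by support. If $k$ is used in both equilibria, $\ell_k(x_k)-\ell_k(x'_k)=K-K'>0$ and monotonicity gives $x_k>x'_k$. If $k$ is used in $t'$ but not in $t$, then $\ell_k(0)+t_k\ge K$ and $\ell_k(x'_k)+t_k=K'$ lead to $\ell_k(0)-\ell_k(x'_k)\ge K-K'>0$, contradicting $x'_k>0$; hence this case cannot occur. The remaining two cases $x'_k=0$ and $x_k=x'_k=0$ are immediate. Combining $x_k\ge x'_k$ for every $k\neq i$ with $x_i\ge x'_i$ from the first step, and with the strict inequality $x_j>x'_j$, contradicts $\sum_k x_k=1=\sum_k x'_k$.

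The only delicate point is the bookkeeping when $N(t)$ and $N(t')$ differ, which is handled by the short support case analysis above; the rest is a clean consequence of Lemma~\ref{lem:vi} and the Wardrop equalities on the used links.
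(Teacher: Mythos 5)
Your proof is correct, and while the first half coincides with the paper's, the second half takes a genuinely different route. Step one --- adding the two variational inequalities of Lemma~\ref{lem:vi}, cancelling the coordinates where the tolls agree, and using monotonicity of the $\ell_k$ to conclude $(t_i-t'_i)(x_i(t)-x_i(t'))\le 0$ and hence $x_i(t)\ge x_i(t')$ --- is exactly what the paper does. For the claim about $j\neq i$ the paper first proves the auxiliary inequality $\ell_i(x_i(t))+t_i\le \ell_i(x_i(t'))+t'_i$ from the summed first-order conditions and then derives a contradiction by chaining effective costs \emph{through link $i$}; this forces it to reduce to the case $x_i(t)\ge x_i(t')>0$ and to add a separate remark handling $x_i(t')=0$. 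You instead compare the two common equilibrium cost levels $K$ and $K'$ directly: the hypothesis $x_j(t)>x_j(t')$ forces $K>K'$, which by your support case analysis forces $x_k(t)\ge x_k(t')$ for \emph{every} $k\ne i$ (the case ``used under $t'$ but not under $t$'' being impossible when $K>K'$), and together with $x_i(t)\ge x_i(t')$ this violates $\sum_k x_k(t)=\sum_k x_k(t')=1$. Your version buys a cleaner second half: it never pivots through link $i$, so it needs no assumption on the support of link $i$ and no intermediate monotonicity of its effective cost; the paper's version produces the effective-cost inequality on link $i$ as a by-product, which is a statement of some independent interest but is not needed on your route. The only point worth stating explicitly in a full write-up is that $K$ and $K'$ are well defined because the unit demand guarantees each equilibrium uses at least one link.
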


\begin{proof}
It is well known (see Beckmann et al.~\cite{BeMcWi56} and Dafermos and Sparrow~\cite{DaSp69}) that $x$ is an equilibrium for $t$ if $x$ solves the following minimization problem
$$
\min_x \sum_{i\in N}\int_{0}^{x_i} (\ell_i(y)+t_i)\,dy.
$$
Hence
\begin{eqnarray*}
\sum_{i\in N}\int_{0}^{x_i(t)} (\ell_i(y)+t_i)\,dy &\leq& \sum_{i\in N}\int_{0}^{x_i(t')} (\ell_i(y)+t_i)\,dy \\
\sum_{i\in N}\int_{0}^{x_i(t')} (\ell_i(y)+t'_i)\,dy &\leq& \sum_{i\in N}\int_{0}^{x_i(t)} (\ell_i(y)+t'_i)\,dy \\
\end{eqnarray*}
Combining these inequalities we get
\begin{equation}\label{eq:monotonicity}
\sum_{j\in N}(t_j-t'_j)\cdot(x_j(t)-x_j(t'))\leq 0.
\end{equation}
Using $t'=(t'_i,t_{-i})$, we get
$$(t_i-t'_i)\cdot(x_i(t)-x_i(t'))\leq 0,$$
and thus $x_i(t)\geq x_i(t')$. 

Before we prove that $x_j(t)\leq x_j(t')$ for all $j\neq i$, we first show that $\ell_i(x_i(t))+t_i\leq \ell_i(x_i(t'))+t'_i$. Since \(x(t)+(x(t^{\prime})-x(t))=x(t^{\prime})\) 
is a feasible flow, \(x(t^{\prime})-x(t)\) is a feasible direction for \(x(t)\). By the 
first-order optimality conditions,
\[\sum_{i\in N} (\ell_i(x_i(t))+t_i)\cdot(x_i(t)-x_i(t^{\prime})) \leq 0.\] 
Analogously,
\(x(t)-x(t^{\prime})\) is a feasible direction for \(x(t^{\prime})\), and thus
\[ \sum_{i\in N} (\ell_i(x_i(t^{\prime}))+t^{\prime}_i)\cdot(x_i(t^{\prime})-x_i(t)) \leq 0.\]
Adding up these inequalities, we obtain
\begin{align*}
&\sum_{j\neq i} (x_j(t)-x_j(t'))\cdot(\ell_j(x_j(t))-\ell_j(x_j(t')))\\
&+ (x_i(t)-x_i(t'))\cdot(\ell_i(x_i(t))+t_i-\ell_i(x_i(t'))-t'_i) \leq 0.
\end{align*} 
Notice that the first summation term 
is nonnegative, as \(\ell_i\) is increasing for all \(i\in N\). Thus, \begin{equation}\ell_i(x_i(t))+t_i\leq \ell_i(x_i(t'))+t'_i. \label{eq:mon}
\end{equation}

Now we prove that $x_j(t)\leq x_j(t')$ for all $j\neq i$. Notice that we can assume that $x_i(t)>0$, as otherwise the conclusion follows trivially. We use a proof by contradiction. Suppose that $x_j(t)>x_j(t')$ for some $j\neq i$. Then
\begin{align*}
\ell_j(x_j(t'))+t_j<\ell_j(x_j(t))+t_j\leq \ell_i(x_i(t))+t_i\leq \ell_i(x_i(t'))+t'_i\leq \ell_j(x_j(t'))+t_j,
\end{align*}
where the first inequality follows from $x_j(t)>x_j(t')$, the second inequality from $x_j(t)>0$, the third inequality from \eqref{eq:mon}, and the fourth inequality from $x_i(t')>0$. This is a contradiction and finishes the proof.
\end{proof}

From now on, we assume that $x_i(0)>0$ for all $i\in N$. In terms
of practical applications, this assumption seems not overly
restrictive since a road segment is usually
only priced if there is traffic without any tolls. However, the assumption is important for proving uniqueness of Nash equilibria. Example~\ref{ex:mul} shows that multiple Nash equilibria may exist in case the full Wardrop support assumption is not satisfied.

The full support assumption $x_i(0)>0$ for all $i\in N$
does not imply a priori that any Nash equilibrium
$x_i(t), i\in N$ for some $t\in\T(c), c\in \R_+$ also has the full support property.
In the following, however,  we show that given the full support property
for $t=0$, it continues to hold for all $t\in\T(c), c\in \R_+$.

\begin{lemma}\label{lem:pos}
Let $c>0$ and $t\in\T(c)$. If $x_i(0)>0$ for all $i\in N$, then $x_i(t)>0$ and $t_i>0$ for all $i\in N$.
\end{lemma}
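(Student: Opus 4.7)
The plan is to argue by contradiction. Suppose there is some firm $i \in N$ with $\Pi_i(t) = 0$, which is exactly the case when $t_i = 0$ or $x_i(t) = 0$. I will exhibit a small deviation $t'_i \in (0, c]$ (which is feasible since $c > 0$) with $\Pi_i(t'_i, t_{-i}) > 0$, contradicting $t \in \T(c)$. It therefore suffices to establish that for all sufficiently small $t'_i > 0$ one has $x_i(t'_i, t_{-i}) > 0$, whence $\Pi_i(t'_i, t_{-i}) = t'_i \cdot x_i(t'_i, t_{-i}) > 0$.

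The first step is to show $x_i(0, t_{-i}) > 0$. I would start from the all-zero toll vector and increase the coordinates $t_j$ for $j \neq i$ one at a time until reaching $(0, t_{-i})$. At each step, Lemma~\ref{lem:mon} (applied with the role of the ``raising'' firm played successively by each $j \neq i$) ensures that $x_i$ weakly increases. Chaining these inequalities gives
\[
x_i(0, t_{-i}) \;\geq\; x_i(0, 0_{-i}) \;=\; x_i(0) \;>\; 0,
\]
where the last strict inequality is the full support assumption.

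The second step is to extend this positivity from $t_i = 0$ to a small positive value by continuity. The Wardrop equilibrium $x(t)$ is characterized by the variational inequality in Lemma~\ref{lem:vi}; because the latency functions in $\mathcal{L}_c$ are strictly increasing and continuous, $x(\cdot)$ is a continuous function of $t$ by standard results on strictly monotone variational inequalities. Hence $x_i(t'_i, t_{-i}) \to x_i(0, t_{-i}) > 0$ as $t'_i \searrow 0$, so $x_i(t'_i, t_{-i}) > 0$ for all sufficiently small $t'_i > 0$. Choosing any such $t'_i$ with $t'_i \leq c$ yields $\Pi_i(t'_i, t_{-i}) > 0 = \Pi_i(t_i, t_{-i})$, the desired contradiction.

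The main obstacle is the continuity step: one must justify that the Wardrop flow on link $i$ does not drop discontinuously to zero as the toll $t_i$ moves off zero. The cleanest justification goes through the variational inequality of Lemma~\ref{lem:vi} combined with strict monotonicity of the latencies; the iterated application of Lemma~\ref{lem:mon} in the first step, by contrast, is routine but does need to be spelled out coordinate by coordinate since Lemma~\ref{lem:mon} only treats a single-firm deviation at a time.
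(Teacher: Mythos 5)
Your proposal is correct and follows essentially the same route as the paper: apply Lemma~\ref{lem:mon} (coordinate by coordinate) to get $x_i(0,t_{-i})\geq x_i(0)>0$, then use continuity of the Wardrop flow in the toll vector to produce a deviation $t'_i\in(0,c]$ with $\Pi_i(t'_i,t_{-i})>0$, which forces $\Pi_i(t)=t_i\cdot x_i(t)>0$ at equilibrium. If anything, your version is slightly more careful than the paper's, which appeals to continuity of the \emph{profit} function where the relevant fact is really continuity of the \emph{flow} near $t_i=0$ (since $\Pi_i(0,t_{-i})=0$ identically), a point you identify correctly.
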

\begin{proof}
By Lemma \ref{lem:mon}, if $x_i(0)>0$ for all $i\in N$, then $x_i(0,t_{-i})>0$ for all $i\in N$ and all $t_{-i}$. As the profit function is continuous in the toll vector (Hayrapetyan et al.~\cite{HaTaWe07}), this implies that for all $t_{-i}$, there is some $t_i$ so that $\Pi_i(t_i,t_{-i})>0$. Hence if $t\in\mathcal{T}(c)$, then for all $i\in N$, $\Pi_i(t_i,t_{-i})>0$ and thus $x_i(t)>0$ and $t_i>0$.
\end{proof}

In the following lemma we derive an explicit price representation of Nash equilibria.
\begin{lemma}\label{lem:pri}
Let $c\in\mathbb{R}_+\cup\{\infty\}$ and $t\in\mathcal{T}(c)$. If $x_i(t)>0$ for all $i\in N$, then for all $i\in N$, 
$$t_i=\min\left\{\left(\ell'_i(x_i(t))+\frac{1}{\sum_{j\neq i}\frac{1}{\ell'_j(x_j(t))}}\right)\cdot x_i(t),c\right\}.$$
\end{lemma}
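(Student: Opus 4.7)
The plan is to treat firm $i$'s problem as a one-dimensional maximization in $t_i$ and extract the formula from its first-order (KKT) condition. Since $x_j(t)>0$ for every $j\in N$ and the flow map is continuous in the toll vector, there is a relatively open neighborhood $U\subseteq [0,c]$ of $t_i$ on which $x_j(t'_i,t_{-i})>0$ for all $j\in N$ and all $t'_i\in U$. On $U$ the Wardrop conditions reduce to the smooth system $\ell_j(x_j)+t_j = K$ for $j\neq i$, $\ell_i(x_i)+t'_i = K$, together with $\sum_{j\in N} x_j = 1$, whose Jacobian in $(x_1,\dots,x_n,K)$ is nonsingular because $\ell'_j>0$. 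Hence, by the implicit function theorem, each $x_j(\cdot,t_{-i})$ is differentiable at $t_i$.

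Next I compute these derivatives explicitly. Writing $S:=\sum_{j\in N}1/\ell'_j(x_j(t))$ and $S_i:=S-1/\ell'_i(x_i(t))=\sum_{j\neq i}1/\ell'_j(x_j(t))$, differentiating the Wardrop system in $t'_i$ and using $\sum_j \partial x_j/\partial t_i = 0$ yields
$$\frac{\partial x_i}{\partial t_i} = -\frac{S_i}{\ell'_i(x_i(t))\,S},\qquad \frac{\partial x_j}{\partial t_i} = \frac{1}{\ell'_j(x_j(t))\,S}\quad(j\neq i).$$
Consequently $\Pi_i(\cdot,t_{-i})$ is differentiable at $t_i$ with
$$\frac{\partial \Pi_i}{\partial t_i}(t_i,t_{-i}) \;=\; x_i(t) - t_i\cdot\frac{S_i}{\ell'_i(x_i(t))\,S}.$$

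I then apply the KKT condition for $\max_{0\leq t'_i\leq c}\Pi_i(t'_i,t_{-i})$ at the optimizer $t'_i=t_i$. By Lemma~\ref{lem:pos} (or, if $c=\infty$, by the observation that $\partial \Pi_i/\partial t_i\vert_{t_i=0}=x_i(t)>0$), one has $t_i>0$. If $t_i<c$ the interior first-order condition $\partial \Pi_i/\partial t_i = 0$ forces
$$t_i \;=\; x_i(t)\cdot\frac{\ell'_i(x_i(t))\,S}{S_i} \;=\; \Bigl(\ell'_i(x_i(t)) + \tfrac{1}{S_i}\Bigr)\cdot x_i(t),$$
using the identity $\ell'_i(x_i(t))\cdot S = \ell'_i(x_i(t))\cdot S_i + 1$; since $t_i\leq c$, this value is at most $c$, so the $\min$ in the claim returns it. If instead $t_i=c<\infty$, the boundary KKT inequality $\partial \Pi_i/\partial t_i\geq 0$ at $t_i=c$ rearranges to $(\ell'_i(x_i(t))+1/S_i)\cdot x_i(t)\geq c$, so the $\min$ returns $c=t_i$. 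Either way the stated formula holds.

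The only genuinely delicate step is justifying differentiability of $x(\cdot,t_{-i})$ at $t_i$; this is handled by the persistence of full support on the neighborhood $U$ plus the implicit function theorem, and once this is in place the rest is a routine one-variable KKT computation, requiring neither concavity of $\Pi_i$ nor affine latencies, so the lemma applies in the full generality of $\mathcal{L}_c$.
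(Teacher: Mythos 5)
Your proof is correct and arrives at the same first-order condition as the paper, but by a genuinely different technical route. The paper treats firm $i$'s problem as a constrained program in the variables $(t_i,x,K)$, with the Wardrop equalities as explicit constraints, writes down the Lagrangian, and eliminates the multipliers $\lambda,\nu,\mu$ to obtain $t_i=\bigl(\ell'_i(x_i)+1/\sum_{j\neq i}1/\ell'_j(x_j)\bigr)\cdot(x_i-\mu)$, finishing via complementary slackness. You instead eliminate the constraints up front: you solve the Wardrop system locally by the implicit function theorem and reduce everything to a scalar KKT condition in $t_i$ alone. The two computations encode the same information (your $\partial x_i/\partial t_i=-S_i/(\ell'_i(x_i(t))S)$ is precisely what the multiplier elimination produces), but your version makes explicit two points the paper glosses over: why the first-order conditions are actually necessary at the optimum (differentiability of $x(\cdot,t_{-i})$, which you obtain from persistence of full support plus nonsingularity of the Jacobian), and why the formula collapses to $t_i=c$ at the boundary (the one-sided inequality $\partial\Pi_i/\partial t_i\geq 0$ at $t_i=c$, whereas the paper only records that $\mu=0$ when $t_i<c$). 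One small slip: the cross-derivative should be $\partial x_j/\partial t_i = 1/\bigl(\ell'_i(x_i(t))\,\ell'_j(x_j(t))\,S\bigr)$ for $j\neq i$ — with your formulas the derivatives do not sum to zero — but this quantity is never used afterwards, so nothing downstream is affected.
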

\begin{proof}
Assume that $x_i(t)> 0$ for all $i\in N$. Given $t_{-i}\in\mathbb{R}^{N\setminus\{i\}}_+$, each firm $i\in N$ solves the following maximization problem:
\begin{align*}
\max\hskip6pt &t_i\cdot x_i\\
s.t.\hskip6pt & \ell_j(x_j)+t_j= K\text{ for all }j\in N,\\
& \sum_{j\in N}x_j=1,\\
& t_i\leq c.
\end{align*}
The corresponding Lagrangian is
\begin{align*}
&\mathcal{L}(t_i,x,K,\lambda,\mu,\nu)\\
&=t_i\cdot x_i-\sum_{j\in N}\lambda_j\cdot(\ell_j(x_j)+t_j-K)-\nu\cdot(\sum_{j\in N}x_j-1)-\mu\cdot (t_i-c).
\end{align*}
So the Kuhn-Tucker conditions reduce to
\begin{align}
\frac{\partial\mathcal{L}(t_i,x,K,\lambda,\mu,\nu)}{\partial t_i}&=x_i-\lambda_i-\mu=0,\label{eq1}\\
\frac{\partial\mathcal{L}(t_i,x,K,\lambda,\mu,\nu)}{\partial x_i}&=t_i-\lambda_i\cdot \ell'_i(x_i)-\nu=0,\label{eq2}\\
\frac{\partial\mathcal{L}(t_i,x,K,\lambda,\mu,\nu)}{\partial x_j}&=-\lambda_j\cdot \ell'_j(x_j)-\nu=0,\label{eq3}\\
\frac{\partial\mathcal{L}(t_i,x,K,\lambda,\mu,\nu)}{\partial K}&=\sum_{j\in N}\lambda_j=0.\label{eq4}
\end{align}

By \eqref{eq1},
\begin{align}
\lambda_i=x_i-\mu
.\label{eq8}
\end{align}

By \eqref{eq2} and plugging in \eqref{eq8},
\begin{align}
t_i=\ell'_i(x_i)\cdot \lambda_i+\nu=\ell'_i(x_i)\cdot (x_i-\mu)+\nu.\label{eq5}
\end{align}

By \eqref{eq3}, for all $j\neq i$,
\begin{align}\lambda_j=-\nu\cdot\frac{1}{\ell'_j(x_j)}.\label{eq7}
\end{align}

By \eqref{eq7},
\begin{align}
\sum_{j\neq i}\lambda_j=-\nu\cdot\sum_{j\neq i}\frac{1}{\ell'_j(x_j)}\label{eq9}
\end{align}

By \eqref{eq4}, \eqref{eq8} and \eqref{eq9},
\begin{align}
\nu=\frac{\lambda_i-\sum_{j\in N}\lambda_j}{\sum_{j\neq i}\frac{1}{\ell'_j(x_j)}}=\frac{\lambda_i}{\sum_{j\neq i}\frac{1}{\ell'_j(x_j)}}=\frac{x_i-\mu}{\sum_{j\neq i}\frac{1}{\ell'_j(x_j)}}.\label{eq6}
\end{align}

Combining \eqref{eq5} and \eqref{eq6} yields
$$t_i=\left(\ell'_i(x_i)+\frac{1}{\sum_{j\neq i}\frac{1}{\ell'_j(x_j)}}\right)\cdot (x_i-\mu).$$

If $t_i<c$, then we know $\mu=0$, and thus,
$$t_i=\left(\ell'_i(x_i)+\frac{1}{\sum_{j\neq i}\frac{1}{\ell'_j(x_j)}}\right)\cdot x_i.$$
\end{proof}

Now we derive a complete characterization of equilibria.

\begin{theorem}\label{lem:cha}
Let $x_i(0)>0$ for all $i\in N$ and let $c\in\R_+$.
The tuple $(t,x)$ is a $c$-capped Nash equilibrium if and only
if the following conditions hold for some $K>0$:
\begin{align}
a_i x_i+b_i+t_i&=K \text{ for all }i\in N,\label{char1}\\
\sum_{i\in N}x_i&=1,\label{char2}\\
t_i&=\min\left\{\left(a_i+\frac{1}{\sum_{j\neq i}\frac{1}{a_j}}\right)\cdot x_i,c\right\} \text{ for all }i\in N,
\label{char3}\\
x_i&> 0 \text{ for all }i\in N.\label{char5}
\end{align}
\end{theorem}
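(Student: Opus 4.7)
The plan is to prove the equivalence in two directions, leveraging the structural results already established.

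For the necessity direction, assume $t\in\T(c)$. By Lemma \ref{lem:pos}, the full support hypothesis $x_i(0)>0$ propagates to $x_i(t)>0$ for every $i\in N$, which establishes (char5). Since $x=x(t)$ is the unique Wardrop equilibrium for $t$ and has full support, the equal effective cost property yields (char1) for some constant $K$, and unit demand yields (char2). Finally, Lemma \ref{lem:pri}, specialized to the affine case with $\ell'_i(x_i)=a_i$, delivers (char3).

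For the sufficiency direction, suppose $(t,x)$ satisfies (char1)--(char5). Conditions (char1), (char2), (char5) together with Lemma \ref{lem:vi} imply that $x$ is the Wardrop equilibrium for $t$, so $x=x(t)$. It remains to show that for each $i\in N$, the toll $t_i$ solves $\max_{t'_i\in[0,c]}\Pi_i(t'_i,t_{-i})$. The proof of Proposition \ref{pro:exi} showed that $\Pi_i(\cdot,t_{-i})$ is piecewise (linear then) quadratic in $t_i$, concave on the interval $[0,\bar t_i]$ where the profit is positive, and identically zero for $t_i\geq\bar t_i$; in particular it is concave on $\R_+$. Hence any point satisfying the KKT conditions for the cap-constrained problem is a global maximizer on $[0,c]$. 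The KKT system derived in the proof of Lemma \ref{lem:pri}, specialized to affine latencies, is exactly the content of (char3), so $t_i$ is optimal and $t\in\T(c)$.

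The main technical obstacle is the sufficiency direction, because (char3) is derived in Lemma \ref{lem:pri} only as a first-order \emph{necessary} condition and does not by itself certify global optimality. The concavity of $\Pi_i(\cdot,t_{-i})$ established inside the proof of Proposition \ref{pro:exi} is exactly what is needed to upgrade (char3) into a sufficient condition, and the $\min$ appearing in (char3) correctly encodes complementary slackness for the cap constraint: if the unconstrained optimum $(a_i+1/\sum_{j\neq i}1/a_j)\cdot x_i$ lies below $c$ the cap is slack, and otherwise it is binding and concavity forces the maximizer to the boundary $t_i=c$. Both cases are contained in (char3), which together with concavity closes the argument.
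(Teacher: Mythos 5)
Your necessity direction is exactly the paper's (Lemma~\ref{lem:pos} for \eqref{char5}, the Wardrop condition for \eqref{char1}--\eqref{char2}, Lemma~\ref{lem:pri} for \eqref{char3}). For sufficiency you take a genuinely different route. The paper never certifies optimality of each $t_i$ directly: it takes the $c$-capped equilibrium $(t,x)$ guaranteed by Proposition~\ref{pro:exi}, observes that by the already-proved necessity direction $(t,x)$ satisfies \eqref{char1}--\eqref{char5}, and then shows any $(t',x')$ satisfying \eqref{char1}--\eqref{char5} must coincide with $(t,x)$, because \eqref{char3} forces $(t_j-t'_j)$ and $(x_j-x'_j)$ to have the same strict sign whenever $t_j\neq t'_j$, contradicting the variational inequality $\sum_j(t_j-t'_j)(x_j-x'_j)\leq 0$; this argument delivers Corollary~\ref{lem:uni} for free. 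Your route --- reversing the KKT derivation of Lemma~\ref{lem:pri} and upgrading stationarity to global optimality via the concavity from Proposition~\ref{pro:exi} --- is more direct in that it verifies each $t_i$ is a best reply without appealing to existence, and it is essentially correct, but two points need tightening. First, $\Pi_i(\cdot,t_{-i})$ is \emph{not} concave on all of $\R_+$: it is concave on $[0,\bar t_i]$ and identically zero afterwards, which creates a convex kink at $\bar t_i$; the argument survives because the candidate $t_i$ has $\Pi_i(t_i,t_{-i})=t_i\cdot x_i>0$ (so it dominates the zero region) and concavity on $[0,\bar t_i]$ handles the rest, but you should say this rather than claim global concavity. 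Second, Lemma~\ref{lem:pri} derives \eqref{char3} \emph{from} the KKT system as a necessary condition, so running it in reverse requires exhibiting admissible multipliers (in the capped case $\mu=x_i-c/\bigl(a_i+1/\sum_{j\neq i}\frac{1}{a_j}\bigr)\geq 0$ precisely because the min in \eqref{char3} is attained at $c$), or, more cleanly, checking the reduced one-dimensional first-order condition $x_i-A_i t_i=0$ (resp.\ $\geq 0$ at $t_i=c$), which is literally equivalent to \eqref{char3}. With these repairs your proof closes; note only that, unlike the paper's, it does not by itself yield the uniqueness statement of Corollary~\ref{lem:uni}, which would then need the paper's monotonicity argument anyway.
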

\begin{proof}
We first show $\Rightarrow$: Conditions~\eqref{char1} and \eqref{char5} follow from Lemma~\ref{lem:pos} and the Wardrop condition. Condition~\eqref{char2} is trivial.
Condition~\eqref{char3} follows from Lemma~\ref{lem:pri}.

Now we prove $\Leftarrow$: let $(t',x')$ be a tuple that satisfies \eqref{char1}-\eqref{char5}. We want to show that $(t',x') =(t,x)$, where $(t,x)$ is a Nash equilibrium, which by Proposition \ref{pro:exi} exists. By \eqref{char3}, we get that tolls $t'$ are feasible w.r.t. $c$. Condition \eqref{char1} implies that $x'$ is a Wardrop equilibrium with full support with respect to $t'$. Hence,
similarly as in \eqref{eq:monotonicity}, we get\begin{equation}\label{eq:monotonicity-}
\sum_{j\in N}(t_j-t'_j)\cdot(x_j-x'_j)\leq 0.
\end{equation}

Assume by contradiction that there is $i\in N$ with $t'_i<t_i$ (the case $t'_i>t_i$ follows similarly). By \eqref{char3}, we get
\[ t'_i=\left(a_i+\frac{1}{\sum_{j\neq i}\frac{1}{a_j}}\right)\cdot x'_i<t_i\leq\left(a_i+\frac{1}{\sum_{j\neq i}\frac{1}{a_j}}\right)\cdot x_i.\]
From this $x'_i<x_i$ follows.
Putting things together, we get
\[ \sum_{j\in N}(t_j-t'_j)(x_j-x'_j)>0\]
a contradiction to~\eqref{eq:monotonicity-}.
\end{proof}

\begin{corollary}\label{lem:uni}
There is at most one tuple $(t,x)$ that satisfies \eqref{char1}-\eqref{char5}, thus the Nash equilibrium is unique.
\end{corollary}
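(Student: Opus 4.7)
The plan is to invoke, for two candidate tuples satisfying the characterization, essentially the same variational argument that already drives the $\Leftarrow$ direction of Theorem~\ref{lem:cha}. Suppose $(t,x)$ and $(t',x')$ both satisfy \eqref{char1}--\eqref{char5}; since Proposition~\ref{pro:exi} guarantees existence and Theorem~\ref{lem:cha} certifies both as $c$-capped subgame perfect Nash equilibria, it suffices to derive a contradiction from $(t,x)\neq(t',x')$.

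First, I would note that by \eqref{char1}, \eqref{char2}, and \eqref{char5}, $x$ is the Wardrop equilibrium for $t$ and $x'$ is the Wardrop equilibrium for $t'$. Using Beckmann's convex-program characterization exactly as in the derivation of \eqref{eq:monotonicity} inside the proof of Lemma~\ref{lem:mon}, I obtain the monotonicity inequality
\[ \sum_{j\in N}(t_j - t'_j)(x_j - x'_j) \leq 0. \]
If $t=t'$, uniqueness of the Wardrop equilibrium for a given toll vector (stated after Lemma~\ref{lem:vi}) immediately forces $x=x'$, contradicting our assumption. Hence we may assume $t\neq t'$.

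The core step is then a coordinatewise case analysis on which branch of the $\min$ in \eqref{char3} is active. Writing $C_j := a_j + 1/\sum_{k\neq j}1/a_k > 0$, I distinguish: both $t_j,t'_j<c$, so the product equals $C_j(x_j-x'_j)^2 \geq 0$; both equal $c$, so the product is zero; and the two mixed cases. If, say, $t_j=c$ and $t'_j=C_j x'_j<c$, then the fact that the $\min$ in \eqref{char3} selects $c$ forces $C_j x_j \geq c$, hence $x_j \geq c/C_j > x'_j$, and together with $t_j>t'_j$ this gives a strictly positive contribution. The symmetric mixed case is analogous.

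The main obstacle is this mixed case, because \eqref{char3} yields only the one-sided inequality $C_j x_j \geq c$ when the cap binds, so one cannot read off $x_j$ from $t_j$ directly; the argument instead uses that $x'_j$ is pinned by the unconstrained branch. Once every term is shown to be nonnegative and at least one index with $t_j\neq t'_j$ contributes strictly, summing yields $\sum_{j\in N}(t_j-t'_j)(x_j-x'_j) > 0$, contradicting the monotonicity inequality and completing the proof of uniqueness.
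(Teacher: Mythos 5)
Your proof is correct and follows essentially the same route as the paper: the paper's uniqueness argument is embedded in the $\Leftarrow$ direction of Theorem~\ref{lem:cha}, which likewise combines the monotonicity inequality $\sum_{j}(t_j-t'_j)(x_j-x'_j)\leq 0$ with the observation that condition~\eqref{char3} forces $\operatorname{sgn}(x_j-x'_j)=\operatorname{sgn}(t_j-t'_j)$ whenever $t_j\neq t'_j$ (the smaller toll lies on the unconstrained branch while the larger is bounded above by $C_j x_j$). Your explicit case split on which branch of the $\min$ is active is just a more detailed write-up of the paper's "putting things together" step, so no further comment is needed.
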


If $\T(c)$ is singular, we let $t(c)$ denote $t\in\T(c)$, and $x(c)$ denote $x(t(c))$.

\section{Optimal price caps for affine latencies}
Assume that $\ell_i(x_i)=a_i\cdot x_i+b_i$ with $a_i> 0$ and $b_i\geq 0$, and $x_i(0)>0$ for all $i\in N$.
Now we have everything together to derive an optimal polynomial
time algorithm for networks with affine latencies, which is the main result of this section.

\IncMargin{1em}
\begin{algorithm}
\SetKwData{Left}{left}\SetKwData{This}{this}\SetKwData{Up}{up}
\SetKwFunction{Union}{Union}\SetKwFunction{FindCompress}{FindCompress}
\SetKwInOut{Input}{input}\SetKwInOut{Output}{output}
\Input{$I=(N,(\ell_i)_{i\in N})$}
\Output{An optimal price cap $c^*$}
\BlankLine
\textbf{initialize}\\
$c_0\leftarrow\infty$\;
$A(c_0)\leftarrow\emptyset$\;
$j\leftarrow 0$\;
\While{$A(c_j)\neq N$}{
  $K(c)\leftarrow \frac{1+\sum\limits_{k\in A(c_j)}\frac{b_k+c}{a_k}+\sum\limits_{k\in N\setminus A(c_j)}\frac{b_k}{2a_k+\frac{1}{\sum\limits_{l\in N\setminus\{k\}}1/a_l}}}{\sum\limits_{k\in A(c_j)}\frac{1}{a_k}+\sum\limits_{k\in N\setminus A(c_j)}\frac{1}{2a_k+\frac{1}{\sum\limits_{l\in N\setminus\{k\}}1/a_l}}}$\;
   $x_i(c)\leftarrow \frac{K(c)-b_i-c}{a_i}$ for all $i\in A(c_j)$\; \label{equ1}
  $x_i(c)\leftarrow \frac{K(c)-b_i}{2a_i+\frac{1}{\sum_{j\neq i}1/a_j}}$ for all $i\in N\setminus A(c_j)$\; \label{equ2}
 $ t_i(c)\leftarrow c$ for all $i\in A(c_j)$\; \label{eq:toll1}
$t_i(c)\leftarrow\left(a_i+\frac{1}{\sum_{l\neq i}1/a_l}\right)\cdot x_i(c)$ for all $i\in N\setminus A(c_j)$\; \label{eq:toll2}\vspace{0.15cm}
  $c_{j+1}^i \leftarrow \max\left\{c\middle\vert \left(a_i+\frac{1}{\sum_{l\neq i}1/a_l}\right)\cdot x_i(c)= c\right\} \text{ for all } i\in N\setminus A(c_j)$\;\vspace{0.15cm}\label{eqc1}
  $c_{j+1} \leftarrow\max\left\{c_{j+1}^i\middle\vert i\in N\setminus A(c_j)\right\}$ \;\label{eqc2}
  $A(c_{j+1})\leftarrow A(c_j) \cup \arg\max\left\{c_{j+1}^i\mid i\in N\setminus A(c_j)\right\}$\;
  $j\leftarrow j+1$\;
  $c^*_j\leftarrow \arg\min_{c\in [c_j,c_{j-1}]} \sum_{i\in N}\ell_i(x_i(c))\cdot x_i(c)$
}
\textbf{output  $c^*\in \arg\min\left\{C(x(c^*_k)), k=1,\dots, j\right\}$}
\caption{An optimal algorithm for affine latencies.}\label{alg:opt}
\end{algorithm}\DecMargin{1em}

\begin{theorem}
Let $\ell_i\in\mathcal{L}_1$ and $x_i(0)>0$ for all $i\in N$. Algorithm~\ref{alg:opt} computes in polynomial time an optimal price cap.
\end{theorem}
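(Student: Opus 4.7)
The plan is to verify that Algorithm~\ref{alg:opt} correctly traces out the unique $c$-capped subgame perfect equilibrium as $c$ varies, using the characterization in Theorem~\ref{lem:cha}, and then minimizes a quadratic function on each piece where the set of capped firms is constant. I would organize the proof in four steps.

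\textbf{Step 1 (piecewise affine structure of the equilibrium).} Fix $c$ and let $A(c)=\{i\in N: t_i(c)=c\}$. By Theorem~\ref{lem:cha} and Corollary~\ref{lem:uni}, the tuple $(t(c),x(c))$ is uniquely determined by \eqref{char1}--\eqref{char5}. For $i\in A(c)$, condition \eqref{char1} gives $x_i=(K-b_i-c)/a_i$; for $i\notin A(c)$, substituting \eqref{char3} into \eqref{char1} gives $x_i=(K-b_i)/\bigl(2a_i+1/\sum_{j\neq i}1/a_j\bigr)$. Imposing $\sum_i x_i=1$ then determines $K=K(c)$ as an affine function of $c$, and hence each $x_i(c)$ and $t_i(c)$ is affine in $c$ on any range where $A(c)$ stays constant. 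These are exactly the expressions computed in lines \ref{equ1}, \ref{equ2}, \ref{eq:toll1}, \ref{eq:toll2} of the algorithm.

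\textbf{Step 2 (monotonicity of the capped set).} The key structural claim is that $A(c)$ is nondecreasing as $c$ decreases, so that $A(c')\supseteq A(c)$ whenever $c'\le c$. I would prove this by contradiction: if some $i\in A(c)$ leaves the cap at a slightly smaller $c'$, then the uncapped Lagrangian toll $\bigl(a_i+1/\sum_{j\neq i}1/a_j\bigr)\cdot x_i(c')$ would lie strictly below $c'<c$, forcing $x_i(c')<x_i(c)$ by monotonicity of that expression in $x_i$; combined with the Wardrop conditions \eqref{char1} and $\sum_j x_j=1$, together with Lemma~\ref{lem:mon} applied piecewise, this contradicts the fact that tightening the cap cannot increase the flow on an already capped link. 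Monotonicity immediately implies that at most $n$ breakpoints $c_0>c_1>\dots>c_j$ occur, corresponding to successive firms joining $A(\cdot)$. The breakpoint computation in lines \ref{eqc1}--\ref{eqc2} of the algorithm finds, among uncapped firms, the largest $c$ at which their uncapped Lagrangian toll meets the cap; this is well-defined because that toll is affine in $c$ and exceeds $c$ for $c$ small enough (otherwise $i$ would never join $A$).

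\textbf{Step 3 (quadratic minimization per interval).} On each interval $[c_{j+1},c_j]$, $A(c)$ is constant so every $x_i(c)$ is affine in $c$; consequently
\[
C(x(c))=\sum_{i\in N}\bigl(a_i\,x_i(c)+b_i\bigr)\cdot x_i(c)
\]
is a univariate quadratic in $c$. Its minimum over the interval is attained either at the unconstrained critical point (if it lies inside) or at an endpoint, and can therefore be computed in closed form; this is the value $c^*_j$ recorded by the algorithm.

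\textbf{Step 4 (global optimum and runtime).} By Step 2 there are at most $n$ intervals, so $\inf_{c\in\mathbb{R}_+} C(x(c))$ is achieved at some $c^*_j$. Each iteration of the while-loop solves a linear system for $K(c)$, evaluates $n$ affine expressions, solves a one-dimensional quadratic in closed form, and performs $O(n)$ breakpoint comparisons, all in polynomial time; the loop runs at most $n$ times. The hardest step to make rigorous is Step~2: without monotonicity of $A(c)$ the number of breakpoints, and hence the algorithm's runtime, could blow up. I would expect the cleanest route there to combine Lemma~\ref{lem:mon} with the explicit representation of $x_i(c)$ obtained in Step~1 to rule out any firm leaving the capped set as $c$ decreases.
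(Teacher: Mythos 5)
Your overall architecture matches the paper's: build the candidate $(t(c),x(c))$ from the linear system on each interval where the capped set is constant, invoke the characterization of Theorem~\ref{lem:cha} and the uniqueness of Corollary~\ref{lem:uni} to certify it is the equilibrium, observe $C(x(c))$ is quadratic per interval, and bound the number of intervals by $n$. Steps 1, 3 and 4 are essentially the paper's argument. The problem is Step 2, which you yourself flag as the hard part and which is where your sketch does not go through as written. After supposing $i\in A(c)$ leaves the cap at $c'<c$, you correctly derive $x_i(c')<x_i(c)$, but then claim this "contradicts the fact that tightening the cap cannot \emph{increase} the flow on an already capped link." A decrease of $x_i$ does not contradict that statement; what you need is the opposite fact, that tightening the cap cannot \emph{decrease} the flow on a capped link. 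More importantly, that fact is not delivered by Lemma~\ref{lem:mon}: the lemma concerns a \emph{unilateral} toll change, whereas lowering $c$ changes the tolls of all capped firms simultaneously and also moves the uncapped firms' tolls endogenously (via their flows), so "applying Lemma~\ref{lem:mon} piecewise" is not a proof, and the argument risks circularity since which firms remain capped is exactly what is in question.

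The paper closes this gap by a direct computation rather than a comparative-statics contradiction: from the closed-form expression for $K(c)$ on the interval one gets
\[
\frac{\partial K(c)}{\partial c}
=\frac{\sum_{i\in A}\frac{1}{a_i}}{\sum_{i\in A}\frac{1}{a_i}+\sum_{i\in N\setminus A}\frac{1}{2a_i+\frac{1}{\sum_{l\neq i}1/a_l}}}\leq 1,
\]
hence $\frac{\partial x_i(c)}{\partial c}=\bigl(\frac{\partial K(c)}{\partial c}-1\bigr)/a_i\leq 0$ for every capped $i$. So as $c$ decreases from $c_k$, the Lagrangian toll $\bigl(a_i+1/\sum_{l\neq i}1/a_l\bigr)x_i(c)$ of each capped firm weakly increases while the cap itself decreases, which preserves condition~\eqref{char3} ($\geq c$) for capped firms down to the next breakpoint; combined with the breakpoint definition handling the uncapped firms ($\leq c$), Theorem~\ref{lem:cha} then certifies the candidate on the whole interval. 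You should replace your contradiction sketch in Step 2 with this derivative computation (or an equivalent explicit argument); without it the monotonicity of $A(c)$, and hence the $n$-interval bound and the correctness of the traced equilibrium, is unproven.
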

\begin{proof}
We show by induction on the iterations 
of the while loop of Algorithm~\ref{alg:opt} (indexed by $k\in\N$) that the algorithm
computes breakpoints $c_1>\dots > c_k>\dots > c_j$ 
so that in the intervals
$\mathcal{I}_1=[c_1,\infty)$ and $\mathcal{I}_k=[c_k,c_{k-1}]$ for $k=2,\ldots,j+1$  with $c_{j+1}=0$ the following invariant holds: for all  $k=1,\dots,j+1$, the flows
$x_i(c), i\in N$ as defined in Lines~\ref{equ1} and~\ref{equ2}
together with prices $t_i(c), i\in N$ as defined in Lines~\ref{eq:toll1} and~\ref{eq:toll2}
constitute the unique Nash equilibrium for all $c\in\mathcal{I}_k$.

Consider the base case $k=1$.  
First, observe that Line~\ref{eqc2} is well defined
since
the maximum in  Line~\ref{eqc2} obviously exists (it is attained at $c_1=\max_{i\in N}t_i(\infty)$).
The parameterized flow $x(c)$ as defined in lines Line~\ref{equ1} and~\ref{equ2} is a solution of the following system of linear equations:
\begin{align*}
a_i\cdot x_i(c)+b_i+\left(a_i+\frac{1}{\sum_{l\neq i}1/a_l}\right)\cdot x_i(c)&=K(c)\text{ for all }i\in N,\\
\sum_{i\in N}x_i(c)&=1.
\end{align*}
By Theorem \ref{lem:cha} and Corollary \ref{lem:uni}, $x(c),t(c)$ is the unique Nash equilibrium for all $c\in \mathcal{I}_1$.

For the inductive step $k\rightarrow k+1$, assume $x(c), t(c)$ as defined in Lines~\ref{equ1}-\ref{eq:toll2} is the unique Nash equilibrium for all $c\in\mathcal{I}_\ell, \ell=1,\ldots,k$.

First, we show again that the maximum $c_{k+1}^i, i\in N\setminus A(c_k)$ in Line~\ref{eqc1} exists so that Lines~\ref{eqc1} and~\ref{eqc2} are well defined, and, 
 that $c_{k+1}=\max\{c_{k+1}^i \vert i\in N\setminus A(c_k)\}<c_k$ as claimed. 
To see
this, observe that for $i\in N\setminus A(c_k)$,
we have $\left(a_i+\frac{1}{\sum_{l\neq i}1/a_l}\right)\cdot x_i(c_k)<c_k$.
On the other hand, by the assumption that $x_i(0)>0$ for all $i\in N$ we have
$ \left(a_i+\frac{1}{\sum_{l\neq i}1/a_l}\right)\cdot x_i(0)>0.$
As the function $\left(a_i+\frac{1}{\sum_{l\neq i}1/a_l}\right)\cdot x_i(c)$ is continuous in $c$,  by the intermediate value theorem, there
exists $\bar c_{k+1}^i$ with 
\[ \left(a_i+\frac{1}{\sum_{l\neq i}1/a_l}\right)\cdot x_i(\bar c_{k+1}^i)=\bar c_{k+1}^i,\]
 implying $c_{k+1}^i<c_k$ and thus $c_{k+1}<c_k$.

We next prove that $x(c),t(c)$  as defined in Lines~\ref{equ1}-\ref{eq:toll2}
 is the unique Nash equilibrium for all $c\in\mathcal{I}_{k+1}$. We prove this by showing that $x(c), t(c)$ satisfies 
 the conditions of Theorem~\ref{lem:cha}. Observe that $x(c)$ is a solution of the following system of linear equations:

\begin{align*}
a_i\cdot x_i(c)+b_i+t_i&=K(c)\text{ for all }i\in N,\\
t_i&=c\text{ for all }i\in A(c_k),\\
t_i&=\left(a_i+\frac{1}{\sum_{l\neq i}1/a_l}\right)\cdot x_i(c)\text{ for all }i\in N\setminus A(c_k),\\
\sum_{i\in N}x_i(c)&=1.
\end{align*}
By Theorem \ref{lem:cha}, it is suffices to show that $t_i(c)$ satisfies condition~\eqref{char3} for all $i\in N, c\in \mathcal{I}_{k+1}$. From Lines~\ref{eqc1} and \ref{eqc2} we get  $t_i(c)\leq c$ for all $i\in N\setminus A(c_k)$ and all $c\in \mathcal{I}_{k+1}$. It remains to show that for all $i\in A(c_k)$ and all $c\in \mathcal{I}_{k+1}$, $$\left(a_i+\frac{1}{\sum_{l\neq i}1/a_l}\right)\cdot x_i(c)\geq c.$$
We know by the induction hypothesis that $x(c_k)$ is the unique Nash equilibrium, and thus satisfies
 $\left(a_i+\frac{1}{\sum_{l\neq i}1/a_l}\right)\cdot x_i(c_k)\geq c_k$ for all $i\in A(c_k)$. For all $c\in(c_{k+1},c_k)$, we have
$$\frac{\partial K(c)}{\partial c}=\frac{\sum\limits_{i\in A(c_j)}\frac{1}{a_i}}{\sum\limits_{i\in A(c_j)}\frac{1}{a_i}+\sum\limits_{i\in N\setminus A(c_j)}\frac{1}{2a_i+\frac{1}{\sum\limits_{l\in N\setminus\{i\}}1/a_l}}}\leq 1,$$
and thus by Line \ref{equ1}, we get for all $i\in A(c_k)$ and all $c\in(c_{k+1},c_k)$, $$\frac{\partial x_i(c)}{\partial c}\leq 0.$$ So when decreasing $c$ from $c_k$ to $c_{k+1}$, it follows that $\left(a_i+\frac{1}{\sum_{l\neq i}1/a_l}\right)\cdot x_i(c)$ increases, and thus $\left(a_i+\frac{1}{\sum_{l\neq i}1/a_l}\right)\cdot x_i(c)\geq c$ for all $i\in A(c_k)$ and all $c\in\mathcal{I}_{k+1}$, which completes the induction proof.

For each $\mathcal{I}_k$, with $k=1,\ldots,j+1$, the objective function $$C(x(c))=\sum_{i\in N}\ell_i(x_i(c))\cdot x_i(c)$$ is quadratic in $c$, thus, we can find a local minimum by comparing the two endpoints and a possible interior minimum point.
For the interior minimum, we just need to check first-order optimality conditions, thus,  solving a linear equation in $c$.
This way, we have effectively partitioned the search space $\R_+$
into at most $n$ intervals. As we solved each segment
optimally, taking the best solution leads to the optimal $c^*$.
\end{proof}

\begin{remark}
The algorithm is still polynomial for arbitrary demands $d\geq 0$
since solving the linear equation systems appearing in the algorithm can be done in polynomial
time in the encoding length of $d$.
\end{remark}

The following example demonstrates the calculation process of the algorithm.

\begin{example}
Consider the network of Figure \ref{fig:alg}.
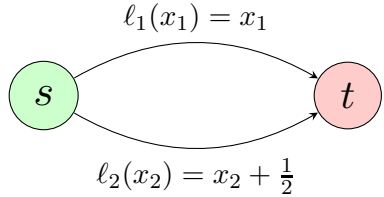
\begin{figure}[ht]
\centering
\begin{tikzpicture}
\node[draw,circle,scale=1.5,fill=green!20] (1) at (0,0) {$s$};
\node[draw,circle,scale=1.4,fill=red!20] (2) at (4,0) {$t$};
\draw[-stealth] (1)[out=30,in=150] to node[above]{$\ell_1(x_1)=x_1$} (2);
\draw[-stealth] (1)[out=-30,in=-150] to node[below]{$\ell_2(x_2)=x_2+\frac{1}{2}$} (2);
\end{tikzpicture}
\caption{A demonstration of the algorithm.}
\label{fig:alg}
\end{figure}

By Lemma \ref{lem:pri}, the uncapped Nash equilibrium prices can be found by solving the following system of linear equations
\begin{align*}
x_1(c)+2x_1(c)&=K(c),\\
x_2(c)+\frac{1}{2}+2x_2(c)&=K(c),\\
x_1(c)+x_2(c)&=1,
\end{align*}
and thus are given by $t(\infty)=\left(\frac{7}{6},\frac{5}{6}\right)$. Initialize $c_1=\frac{7}{6}$ and $A(c_1)=\{1\}$. Solve the following system of linear equations
\begin{align*}
x_1(c)+c&=K(c),\\
x_2(c)+\frac{1}{2}+2x_2(c)&=K(c),\\
x_1(c)+x_2(c)&=1,
\end{align*}
yields $x_2(c)=\frac{2c+1}{8}$. Solving $2\cdot\frac{2c+1}{8}=c$ yields $c=\frac{1}{2}$. Hence, $c_2=\frac{1}{2}$ and $A(c_2)=\{1,2\}$.

Therefore,
\begin{align*}
C(x(c))=\begin{cases}
\frac{3}{4}, &\mbox{ if }c\leq\frac{1}{2},\\
\frac{4c^2-9c+27}{32}, &\mbox{ if }\frac{1}{2}<c\leq\frac{7}{6},\\
\frac{13}{18}, &\mbox{ if }c>\frac{171}{224},
\end{cases}
\end{align*}
which is minimized for $c=1$. Notice that we induce the optimal flow with a uniform price cap of $c=1$.
\end{example}

The last example of this section shows that the full Wardrop support assumption is important. Without this assumption, a Nash equilibrium need not be unique.

\begin{example}\label{ex:mul}
Consider the network of Figure \ref{fig:mul}.
\begin{figure}[ht]
\centering
\begin{tikzpicture}
\node[draw,circle,scale=1.5,fill=green!20] (1) at (0,0) {$s$};
\node[draw,circle,scale=1.4,fill=red!20] (2) at (4.5,0) {$t$};
\draw[-stealth] (1)[out=45,in=135] to node[above]{$\ell_1(x_1)=x_1$} (2);
\draw[-stealth] (1)[out=0,in=180] to node[above]{$\ell_2(x_2)=x_2$} (2);
\draw[-stealth] (1)[out=-45,in=-135] to node[below]{$\ell_3(x_3)=\frac{x_3}{2}+\frac{6}{5}$} (2);
\end{tikzpicture}
\caption{Multiple Nash equilibria.}
\label{fig:mul}
\end{figure}
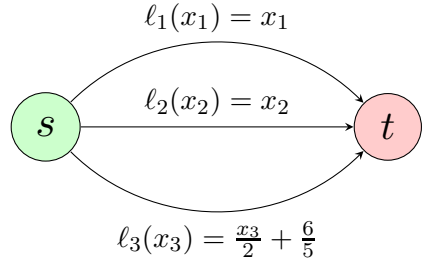
Assume that $t_3=0$. Then
\begin{align*}
B^{\infty}_1(t_2,t_3)=\begin{cases}
t_1=\frac{t_2+1}{2}, &\mbox{ if }0\leq t_2\leq\frac{3}{5},\\
t_1=\frac{7-5t_2}{5}, &\mbox{ if }\frac{3}{5}<t_2\leq\frac{5}{7},\\
t_1=\frac{5t_2+17}{30}, &\mbox{ if }\frac{5}{7}<t_2\leq\frac{17}{10},\\
t_1=\frac{17}{20}, &\mbox{ if }t_2>\frac{17}{10}.
\end{cases}
\end{align*}
and
\begin{align*}
B^{\infty}_2(t_1,t_3)=\begin{cases}
t_2=\frac{t_1+1}{2}, &\mbox{ if }0\leq t_1\leq\frac{3}{5},\\
t_2=\frac{7-5t_1}{5}, &\mbox{ if }\frac{3}{5}<t_1\leq\frac{5}{7},\\
t_2=\frac{5t_1+17}{30}, &\mbox{ if }\frac{5}{7}<t_1\leq\frac{17}{10},\\
t_2=\frac{17}{20}, &\mbox{ if }t_1>\frac{17}{10}.
\end{cases}
\end{align*}
\begin{figure}[ht]
\centering
\begin{tikzpicture}[scale=1]
\draw[very thin,color=gray] (-0.1,-0.1) grid (4.1,4.1);
\node[below left] at (0,0){0};
\node[below] at (2,0){$\frac{1}{2}$};
\node[below] at (4,-0.1){1};
\node[left] at (0,2){$\frac{1}{2}$};
\node[left] at (0,4){1};
\draw[->] (-0.2,0) -- (4.2,0) node[right] {$t_1$};
\draw[->] (0,-0.2) -- (0,4.2) node[above] {$t_2$};
\draw[color=blue] (2,0) -- (3.2,2.4);
\draw[color=blue] (3.2,2.4) -- (96/35,20/7);
\draw[color=blue] (96/35,20/7) -- (44/15,4);
\draw[color=red,dashed] (0,2) -- (2.4,3.2);
\draw[color=red,dashed] (2.4,3.2) -- (20/7,96/35);
\draw[color=red,dashed] (20/7,96/35) -- (4,44/15);
\end{tikzpicture}
\caption{Best reply correspondence: in blue for player 1, and in (dashed) red for player 2.}
\label{fig:br1}
\end{figure}
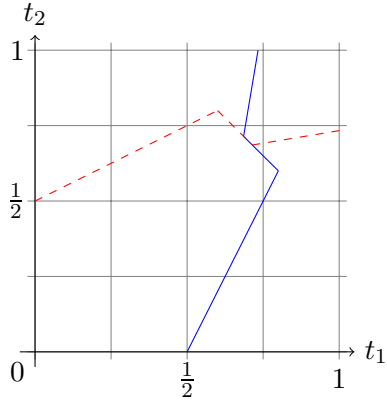
Combining $B^{\infty}_1(t_2,t_3)$ and $B^{\infty}_2(t_1,t_3)$ (see Figure \ref{fig:br1}) implies that the set

\noindent $\left\{\left(t_1,\frac{7}{5}-t_1,0\right)\mid\frac{24}{35}\leq t_1\leq\frac{5}{7}\right\}$ are Nash equilibria.

\end{example}
\begin{remark} The network pricing game of Example \ref{ex:mul} exhibits another interesting phenomenon: if $\ell_3(x_3)=a_3\cdot x_3+\frac{6}{5}$ with $a_3\leq 2/3$, then every equilibrium yields a profit of 0 for firm 3, whereas if $\ell_3(x_3)=a_3\cdot x_3+\frac{6}{5}$ with $a_3> 2/3$, then the equilibrium is unique and yields a strictly positive profit for firm 3. This implies that firm 3 is worse off by being congestion free than by being sufficiently congestion dependent. The reason is that in the former case firm 3 is too competitive, and thus its two competitors set prices in such a way that prevents the third firm from entering the market.
\end{remark}

\section{Optimal price caps for duopolies}
In this section, we compare the latency costs of optimal price caps
with those of optimal solutions minimizing total congestion (ignoring price competition).  The aim of this comparison is to  reveal
the possible strength of introducing price caps as a mechanism.
Our results, however, only hold for duopolies, that is, we assume $n=2$
for the following analysis. We conjecture though that our results carry over
to general $n$  (in the spirit of Pigou instances).

Let $c^*$ denote an optimal price cap (a solution to problem \ref{eq:problem}), let $x^*$ denote the optimal flow (ignoring price caps), and let $\L\subseteq\L_c$ be a class of latency functions. We are interested in the ratio $\rho(\L)$ between the latency costs at the optimal price cap and the latency costs of the optimal flow, defined by, for all $\ell\in\L$,
\[ C(x(c^*))\leq \rho(\L)\cdot C(x^*). \]

Our main results are as follows.
\begin{theorem}\label{thm:main1}
Let $n=2$ and $\T(c)\neq\emptyset$ for all $c\in\R_+$. Then 
\begin{enumerate}
\item[(i)] $\rho(\L_d)\leq(1-\frac{d}{2(d+1)^{(d+1)/d}})^{-1}$,
\item[(ii)] $\rho(\L_c)\leq 2$.
\end{enumerate}
\end{theorem}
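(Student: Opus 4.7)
The strategy is to reduce the three-level objective to the congestion cost at an arbitrary uncapped subgame perfect Nash equilibrium (which exists by assumption) and then bound that cost by a Pigou-type smoothness argument that exploits the special form of the duopoly tolls given by Lemma~\ref{lem:pri}. First I would show that any sufficiently large cap is non-binding: fix any $t^\infty \in \T(\infty)$ and set $\bar c := \max_i t_i^\infty$. For every $c \geq \bar c$ one has $t^\infty \in \T(c)$ and, conversely, any $t \in \T(c)$ with $t_i < c$ for all $i$ lies in $\T(\infty)$. Taking $c$ large enough therefore yields
\[ \inf_{c \in \R_+} \sup_{t \in \T(c)} C(x(t)) \;\leq\; \sup_{t \in \T(\infty)} C(x(t)), \]
so it suffices to bound $C(x(t))$ for $t \in \T(\infty)$. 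Specialising Lemma~\ref{lem:pri} to $c = \infty$ and $n = 2$ gives, on the full-support branch,
\[ t_i \;=\; \bigl(\ell'_i(x_i) + \ell'_j(x_j)\bigr)\,x_i, \qquad \{i,j\}=\{1,2\}, \]
with $x = x(t)$ the induced Wardrop flow; the degenerate cases in which one firm is priced out are handled separately using Lemma~\ref{lem:mon} and yield smaller ratios.

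Next, I would compare $C(x)$ with $C(x^*)$ by combining the Wardrop variational inequality (Lemma~\ref{lem:vi}) for $(x,t)$ against $x^*$ with the first-order optimality of $x^*$ for $C$, producing an estimate of the form
\[ C(x) \;\leq\; \sum_{i} \ell_i(x_i)\,x_i^* \;-\; \sum_i t_i(x_i - x_i^*). \]
The first sum is controlled by the classical Pigou inequality $\ell(x)y \leq \ell(y)y + \mu_d\,\ell(x)x$ with $\mu_d := d(d+1)^{-(d+1)/d}$ for $\L_d$, and by the analogous limiting inequality, derived from convexity, for $\L_c$. The second sum is where the duopoly structure shows up: using the toll identity above and the flow-conservation identities $\sum_i x_i = \sum_i x_i^* = 1$, the correction term reduces to an expression in the signed deviation $\delta := x_1 - x_1^*$ that can be bounded via Lemma~\ref{lem:mon}. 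The point is that duopoly tolls are proportional to $x_i$ rather than to $\ell'_i(x_i)\,x_i$, so each firm internalises only half of the congestion externality; tracking this through the algebra replaces the Pigou constant $\mu_d$ by $\mu_d/2$ and gives
\[ C(x) \;\leq\; \bigl(1 - \tfrac{\mu_d}{2}\bigr)^{-1} C(x^*) \;=\; \bigl(1 - \tfrac{d}{2(d+1)^{(d+1)/d}}\bigr)^{-1} C(x^*), \]
which is exactly~(i); at $d = 1$ this specialises to $8/7$. For~(ii) I would either pass to the limit $d \to \infty$, in which case $\mu_d \to 1$ and the bound converges to $2$, or rerun the argument directly using only the convexity of $\ell_i$.

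The main obstacle is the ``factor-of-two'' step: a naive application of Pigou's inequality to $\sum_i \ell_i(x_i)\,x_i^*$ alone yields only the standard price-of-anarchy bound $(1-\mu_d)^{-1}$, which is too weak. Converting the qualitative statement that duopolists internalise half of the externality into a sharp quantitative improvement requires a careful decomposition of $\sum_i t_i(x_i - x_i^*)$ using the explicit SPNE toll formula, the Wardrop identity $\ell_1(x_1)+t_1 = \ell_2(x_2)+t_2$, and the monotonicity consequences of Lemma~\ref{lem:mon}. This is precisely the place where the hypothesis $n = 2$ is essential: the expression $1/\sum_{j\neq i} 1/\ell'_j(x_j)$ from Lemma~\ref{lem:pri} collapses to the clean form $\ell'_j(x_j)$ only in the duopoly case, which is what makes the proportionality $t_i \propto x_i$ available and the halving of $\mu_d$ possible.
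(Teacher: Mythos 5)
Your opening reduction is where the argument breaks: you claim that it suffices to bound $C(x(t))$ for $t\in\T(\infty)$ and then assert $C(x(t))\leq(1-\mu_d/2)^{-1}C(x^*)$ for every uncapped equilibrium. That intermediate claim is false, and the paper's own motivating example (the two-link network with $\ell_1(x_1)=x_1$, $\ell_2(x_2)=a_2x_2$) refutes it: there the unique uncapped subgame perfect Nash equilibrium satisfies $C(x(t))/C(x^*)=\frac{a_2^2+7a_2+1}{9a_2}\rightarrow\infty$, while your bound for $d=1$ would give $8/7$. The whole point of the cap is that the uncapped equilibrium can be arbitrarily bad, so the regulator must sometimes choose a small or intermediate cap; an upper bound on $\inf_c\sup_{t\in\T(c)}C(x(t))$ cannot be obtained by evaluating only at large $c$. (Your non-bindingness step is also shaky in the direction you need it: a $t\in\T(c)$ with $t_i<c$ need not lie in $\T(\infty)$, since the profit functions are not concave in general, so $\sup_{\T(c)}$ need not be dominated by $\sup_{\T(\infty)}$.)

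The paper's proof is instead a case analysis over three candidate caps. If $x_1^*$ lies between the Wardrop flow $x_1(0)$ and the uncapped equilibrium flow $x_1(t)$, a continuity/intermediate-value argument (using the existence assumption $\T(c)\neq\emptyset$ for all $c$) produces a cap $c$ with $x_1(c)=x_1^*$, so the optimal cap induces the optimal flow. Otherwise $x_1^*<\min\{x_1(0),x_1(t)\}$, and a comparison lemma shows $x_1(0)\geq x_1(t)$ iff $x_1(t)\geq\tfrac12$; if $x_1(t)\leq\tfrac12$ one bounds the cost of the \emph{Wardrop} flow (cap $c=0$) using the standard smoothness term together with $x_1(0)\leq\tfrac12$, which is what halves the Pigou constant, while if $x_1(t)\geq\tfrac12$ one bounds the cost of the \emph{uncapped} equilibrium using a second smoothness parameter that incorporates the toll term $\sum_i t_i(x_i^*-x_i(t))$ via the explicit duopoly toll formula. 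So the factor-of-two improvement does not come solely from ``firms internalise half the externality''; in one of the two cases it comes from the geometric fact that the overloaded link carries at most half the demand. Your instinct to use Lemma~\ref{lem:pri} and the toll correction term is the right ingredient for the $x_1(t)\geq\tfrac12$ case, but without the case split and the $c=0$ and intermediate-cap branches the proof cannot be repaired.
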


Suppose that we have a Nash equilibrium in which one firm receives all the flow, say firm $1$. The proof of Lemma \ref{lem:pri} implies that in that case $\ell_1(1)+\ell'_1(1)\leq\ell_2(0)$. This again implies that the duopoly and optimal flow coincide. So in the remainder of this section, we assume that $x_i(t)>0$ for all $i=1,2$ for $t\in \T(\infty)$.

Before we prove the main result of this section, we introduce some helpful lemmas.
\begin{lemma}\label{lem:2eq}
Let $t\in \T(\infty)$. Suppose that $x_i(t)>0$ for $i=1,2$. Then $x_1(0)\geq x_1(t)$ if and only if $x_1(t)\geq\frac{1}{2}$.
\end{lemma}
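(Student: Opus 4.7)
The plan is to combine the price formula from Lemma~\ref{lem:pri} (in the uncapped duopoly case the minimum is not attained at $c=\infty$, so the formula reduces to its first argument) with the Wardrop condition at $t$, and then compare the result against the Wardrop condition at the zero toll vector by a monotonicity argument.

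First, specialize Lemma~\ref{lem:pri} to $n=2$ and $c=\infty$. Since $x_i(t)>0$ for $i=1,2$, and the reciprocal harmonic-mean term collapses to a single summand, I obtain for $\{i,j\}=\{1,2\}$ the clean expression
\[
t_i \;=\; \bigl(\ell'_i(x_i(t))+\ell'_j(x_j(t))\bigr)\cdot x_i(t).
\]
Set $S:=\ell'_1(x_1(t))+\ell'_2(x_2(t))>0$. Plugging these prices into the Wardrop equality $\ell_1(x_1(t))+t_1=\ell_2(x_2(t))+t_2$ and using $x_1(t)+x_2(t)=1$, the tolls factor out and yield the key identity
\[
\ell_1(x_1(t))-\ell_2(x_2(t)) \;=\; S\cdot(x_2(t)-x_1(t)) \;=\; S\cdot\bigl(1-2x_1(t)\bigr).
\]
In particular, the sign of $\ell_1(x_1(t))-\ell_2(x_2(t))$ equals the sign of $1-2x_1(t)$.

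Second, introduce the auxiliary map $f:[0,1]\to\mathbb{R}$ defined by $f(y):=\ell_1(y)-\ell_2(1-y)$. Since both $\ell_1$ and $\ell_2$ are strictly increasing, $f$ is strictly increasing. The Wardrop equilibrium at $t=0$ has full support by assumption, so $\ell_1(x_1(0))=\ell_2(x_2(0))=\ell_2(1-x_1(0))$, i.e.\ $f(x_1(0))=0$.

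Third, combine the two: $f(x_1(t))=\ell_1(x_1(t))-\ell_2(x_2(t))=S\cdot(1-2x_1(t))$, so $f(x_1(t))\geq 0 \iff x_1(t)\leq \tfrac12$. Strict monotonicity of $f$ gives $x_1(t)\leq x_1(0) \iff f(x_1(t))\leq 0 \iff x_1(t)\geq \tfrac12$, which is exactly the equivalence claimed.

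There is no real obstacle; the argument is short once one observes that the duopoly case ($n=2$) reduces the harmonic-mean term in Lemma~\ref{lem:pri} to $\ell'_j(x_j(t))$, making the common factor $S$ symmetric in the two firms and producing the identity above. The rest is a one-line monotonicity comparison against the uncapped Wardrop equilibrium.
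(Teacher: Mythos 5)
Your proof is correct and follows essentially the same route as the paper's: both arguments rest on Lemma~\ref{lem:pri}, which in the uncapped duopoly case gives $t_i=\bigl(\ell'_1(x_1(t))+\ell'_2(x_2(t))\bigr)\cdot x_i(t)$ so that the toll ordering mirrors the flow ordering, combined with the Wardrop conditions at $t=0$ and at $t$. You merely repackage the paper's two case-by-case inequality chains into a single chain of equivalences via the strictly increasing function $f(y)=\ell_1(y)-\ell_2(1-y)$ and the identity $f(x_1(t))=S\cdot(1-2x_1(t))$, which is a slightly slicker presentation of the same argument (and your appeal to full support at $t=0$ is covered by the paper's standing assumption $x_i(0)>0$).
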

\begin{proof}
Suppose that $x_1(0)\geq x_1(t)$. Then
$$\ell_1(x_1(t))\leq\ell_1(x_1(0))\leq\ell_2(x_2(0))\leq\ell_2(x_2(t)),$$
where the first inequality follows from $x_1(t)\leq x_1(0)$ and $\ell_1$ increasing, the second inequality from $x_1(0)\geq x_1(t)>0$, and the third inequality from $x_2(0)\leq x_2(t)$ and $\ell_2$ increasing. Since $x_i(t)>0$ for $i=1,2$, we have that $\ell_1(x_1(t))+t_1=\ell_2(x_2(t))+t_2$ and thus $t_1\geq t_2$. Hence Lemma \ref{lem:pri} implies that $x_1(t)\geq x_2(t)$ and thus $x_1(0)\geq x_1(t)\geq\frac{1}{2}$.

Suppose that $x_1(0)<x_1(t)$. Then
$$\ell_2(x_2(t))<\ell_2(x_2(0))\leq\ell_1(x_1(0))<\ell_1(x_1(t)),$$
where the first inequality follows from $x_2(t)<x_2(0)$ and $\ell_2$ increasing, the second inequality from $x_2(0)>0$, and the third inequality from $x_1(0)<x_1(t)$ and $\ell_1$ increasing. Since $x_i(t)>0$ for $i=1,2$, we have that $\ell_1(x_1(t))+t_1=\ell_2(x_2(t))+t_2$ and thus $t_1< t_2$. Hence Lemma \ref{lem:pri} implies that $x_1(t)<x_2(t)$ and thus $x_1(t)<\frac{1}{2}$.
\end{proof}

Define 
$$\mu_1(\ell_i)=\sup_{x_i,x^*_i\geq0}\frac{(\ell_i(x_i)-\ell_i(x^*_i))\cdot x^*_i}{\ell_i(x_i)\cdot x_i}$$
for each $\ell_i\in\mathcal{L}$ and
$$\mu_1(\mathcal{L})=\sup_{\ell_i\in\mathcal{L}}\mu_1(\ell_i).$$
The parameter $\mu_1(\mathcal{L})$ is a measure of the steepness of the class of allowable latency functions that is well studied in the context of bounding the price of anarchy in routing games. It is well known that the price of anarchy in routing games can be bounded as a function of the class of allowable latency functions, but usually not in terms of other characteristics of the instance like the topology of the network. For more details, see, for example, Correa et al.~\cite{CorSS04} and Roughgarden~\cite{Ro03}. Observe that $\mu_1(\mathcal{L})\in[0,1]$.
\begin{lemma}\label{lem:we}
If $x_i(0)>x^*_i$ and $x_i(0)\leq\frac{1}{2}$ for some $i=1,2$, then $C(x(0))\leq\frac{1}{1-\mu_1(\mathcal{L})/2}\cdot C(x^*)$.
\end{lemma}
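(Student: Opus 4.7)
My plan is to run the standard Roughgarden-style smoothness argument starting from the variational inequality for $x(0)$, and to exploit the two-link topology to save a factor of two in the resulting smoothness penalty.

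Without loss of generality assume the hypothesis holds for $i=1$, so $x_1(0) > x_1^*$ and $x_1(0) \leq 1/2$. Flow conservation ($x_1 + x_2 = 1$ for both flows) then forces $x_2(0) = 1 - x_1(0) < 1 - x_1^* = x_2^*$ and $x_2(0) \geq 1/2$. Lemma~\ref{lem:vi} applied at $t=0$ gives $\sum_i \ell_i(x_i(0))(x_i(0) - x_i^*) \leq 0$, which I rewrite as
$$C(x(0)) \;\leq\; C(x^*) + \sum_{i=1,2}\bigl(\ell_i(x_i(0)) - \ell_i(x_i^*)\bigr)\cdot x_i^*.$$

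The first saving comes from dropping the $i=2$ summand: since $\ell_2$ is strictly increasing and $x_2(0) < x_2^*$, that term is non-positive. Only link $1$ contributes, and the definition of $\mu_1(\mathcal{L})$ bounds the remaining quantity by $\mu_1(\mathcal{L})\cdot \ell_1(x_1(0))\cdot x_1(0)$. The second saving uses the full-support Wardrop condition (recall the standing assumption $x_i(0) > 0$ introduced after Lemma~\ref{lem:pos}): $\ell_1(x_1(0)) = \ell_2(x_2(0)) = K$ for some $K \in \R_+$, whence $C(x(0)) = K\cdot(x_1(0)+x_2(0)) = K$, and together with $x_1(0) \leq 1/2$ this yields
$$\ell_1(x_1(0))\cdot x_1(0) \;\leq\; K/2 \;=\; C(x(0))/2.$$
Plugging in, $C(x(0)) \leq C(x^*) + \tfrac{\mu_1(\mathcal{L})}{2}\, C(x(0))$, and rearrangement gives the claimed bound.

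There is no real obstruction here; the argument is essentially Pigou smoothness, and the factor $1/2$ in the denominator comes cleanly from the two observations above (on a two-link network at most one link can carry strictly more flow than in the optimum, and a link carrying at most half the flow contributes at most half the Wardrop cost because both links share the same effective cost $K$). The only points one must be careful about are invoking Lemma~\ref{lem:vi} in the right direction and recording that full support of $x(0)$ is what legitimates writing $\ell_1(x_1(0)) = \ell_2(x_2(0))$.
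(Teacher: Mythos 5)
Your proof is correct and follows essentially the same route as the paper's: the variational inequality of Lemma~\ref{lem:vi}, discarding the link-$2$ term because $x_2(0)<x_2^*$, bounding the link-$1$ term via $\mu_1(\mathcal{L})$, and then using $\ell_1(x_1(0))\leq C(x(0))$ together with $x_1(0)\leq\tfrac12$ to absorb the factor $\tfrac12$. Your explicit justification of $\ell_1(x_1(0))=C(x(0))$ via the common Wardrop cost $K$ is just a spelled-out version of the paper's final step.
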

\begin{proof}
W.l.o.g. suppose that $x_1(0)>x^*_1$ and $x_1(0)\leq\frac{1}{2}$. By Lemma \ref{lem:vi}, we have
\begin{align*}
C(x(0))&\leq C(x^*)+\sum_{i=1}^2(\ell_i(x_i(0))-\ell_i(x^*_i))\cdot x^*_i,\\
&\leq C(x^*)+(\ell_1(x_1(0))-\ell_1(x^*_1))\cdot x^*_1,
\end{align*}
where the second inequality follows from $x_2(0)< x^*_2$ and $\ell_2$ increasing. By definition of $\mu_1(\mathcal{L})$,
$$\frac{(\ell_1(x_1(0))-\ell_1(x^*_1))\cdot x^*_1}{\ell_1(x_1(0))\cdot x_1(0)}\leq\mu_1(\mathcal{L}).$$
The lemma then follows because $\ell_1(x_1(0))\leq C(x(0))$ and $x_1(0)\leq\frac{1}{2}$.
\end{proof}

Define 
$$\mu_2(\ell_i)=\sup_{x_i\geq\frac{1}{2},0\leq x^*_i\leq x_i}\frac{(\ell_i(x_i)-\ell_i(x^*_i))\cdot(x^*_i+1 -2x_i)}{\ell_i(x_i)}$$
for each $\ell_i\in\mathcal{L}$ and
$$\mu_2(\mathcal{L})=\sup_{\ell_i\in\mathcal{L}}\mu(\ell_i).$$
The parameter $\mu_2(\mathcal{L})$ is a new smoothness parameter that takes into account the pricing behavior of the firms. Observe that $\mu_2(\mathcal{L})\in[0,1]$.
\begin{lemma}\label{lem:oe}
Let $t\in \T(\infty)$ and $x_i(t)>0$ for all $i=1,2$. If $x_i(t)>x^*_i$ and $x_i(t)\geq\frac{1}{2}$ for some $i=1,2$, then $C(x(t))\leq\frac{1}{1-\mu_2(\mathcal{L})}\cdot C(x^*)$.
\end{lemma}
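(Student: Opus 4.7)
The plan is to mimic the structure of the proof of Lemma \ref{lem:we}, but now using the duopoly equilibrium pricing formula (Lemma \ref{lem:pri}) and convexity of $\ell_1$ in order to convert the $x^*_1$ in a standard variational-inequality bound into the expression $x^*_1 + 1 - 2x_1(t)$ that appears in $\mu_2$. WLOG assume $i=1$, so $x_1(t) > x^*_1 \geq 0$, $x_1(t)\geq 1/2$, and hence $x_2(t) \leq 1/2 < x^*_2$.

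First I would decompose the cost difference exactly. Splitting each term as $\ell_i(x_i(t))x_i(t) - \ell_i(x^*_i)x^*_i = \ell_i(x_i(t))(x_i(t)-x^*_i) + (\ell_i(x_i(t))-\ell_i(x^*_i))x^*_i$, and using $x_1(t)+x_2(t)=1=x^*_1+x^*_2$ together with the Wardrop identity $\ell_1(x_1(t))+t_1 = \ell_2(x_2(t))+t_2$, the sum $\sum_i \ell_i(x_i(t))(x_i(t)-x^*_i)$ collapses to $(t_2-t_1)(x_1(t)-x^*_1)$. Since $x_2(t) < x^*_2$ and $\ell_2$ is increasing, the term $(\ell_2(x_2(t))-\ell_2(x^*_2))x^*_2$ is $\leq 0$ and can be discarded, yielding
\[ C(x(t)) - C(x^*) \leq (\ell_1(x_1(t))-\ell_1(x^*_1))x^*_1 + (t_2-t_1)(x_1(t)-x^*_1). \]

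The central step is to turn $(t_2-t_1)(x_1(t)-x^*_1)$ into $-(\ell_1(x_1(t))-\ell_1(x^*_1))(2x_1(t)-1)$. By Lemma \ref{lem:pri} (with $c=\infty$ and $n=2$), $t_i = (\ell'_1(x_1(t)) + \ell'_2(x_2(t)))\,x_i(t)$, so $t_1 - t_2 = (\ell'_1(x_1(t))+\ell'_2(x_2(t)))(2x_1(t)-1)$, which is $\geq 0$. Convexity of $\ell_1$ gives $\ell_1(x_1(t)) - \ell_1(x^*_1) \leq \ell'_1(x_1(t))(x_1(t)-x^*_1)$; multiplying by the nonnegative factor $(2x_1(t)-1)$ and using $\ell'_2(x_2(t))\geq 0$ yields
\[ (\ell_1(x_1(t))-\ell_1(x^*_1))(2x_1(t)-1) \leq (\ell'_1(x_1(t))+\ell'_2(x_2(t)))(x_1(t)-x^*_1)(2x_1(t)-1) = -(t_2-t_1)(x_1(t)-x^*_1). \]
Substituting back collapses the two surviving terms into
\[ C(x(t))-C(x^*) \leq (\ell_1(x_1(t))-\ell_1(x^*_1))(x^*_1 + 1 - 2x_1(t)), \]
which is precisely the numerator appearing in the definition of $\mu_2(\ell_1)$. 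Since $x_1(t)\geq 1/2$ and $0 \leq x^*_1 \leq x_1(t)$, the definition of $\mu_2$ gives the bound $(\ell_1(x_1(t))-\ell_1(x^*_1))(x^*_1+1-2x_1(t)) \leq \mu_2(\mathcal{L})\,\ell_1(x_1(t))$.

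Finally I need $\ell_1(x_1(t)) \leq C(x(t))$. From Lemma \ref{lem:pri}, $t_1 \geq t_2$ (because $x_1(t) \geq x_2(t)$ and both tolls have the same multiplier), so the Wardrop identity yields $\ell_1(x_1(t)) \leq \ell_2(x_2(t))$. Then
\[ C(x(t)) = \ell_1(x_1(t))x_1(t) + \ell_2(x_2(t))x_2(t) \geq \ell_1(x_1(t))(x_1(t)+x_2(t)) = \ell_1(x_1(t)). \]
Combining the displayed inequalities gives $C(x(t)) - C(x^*) \leq \mu_2(\mathcal{L})\,C(x(t))$, which rearranges to the claim. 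The hard step is the algebraic identity that rewrites $(t_2-t_1)(x_1(t)-x^*_1)$ via Lemma \ref{lem:pri} and combines it, via convexity, with the VI residual $(\ell_1(x_1(t))-\ell_1(x^*_1))x^*_1$ to produce exactly the $x^*_1+1-2x_1(t)$ factor of $\mu_2$; everything else is bookkeeping.
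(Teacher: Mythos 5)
Your proof is correct and follows essentially the same route as the paper's: the paper opens with the variational inequality of Lemma~\ref{lem:vi} where you use the exact Wardrop identity $\ell_1(x_1(t))+t_1=\ell_2(x_2(t))+t_2$ to collapse $\sum_i \ell_i(x_i(t))(x_i(t)-x_i^*)$ to $(t_2-t_1)(x_1(t)-x_1^*)$, but for a full-support equilibrium these are the same step, and from there both arguments invoke Lemma~\ref{lem:pri} to write $t_1-t_2=(\ell_1'+\ell_2')(2x_1(t)-1)$, use convexity of $\ell_1$ and $\ell_2'\geq 0$ to reach the factor $x_1^*+1-2x_1(t)$, and finish with $\ell_1(x_1(t))\leq C(x(t))$ via $t_1\geq t_2$. (Your parenthetical claim $x_2^*>\tfrac12$ need not hold, but only $x_2(t)<x_2^*$ is actually used, so this is harmless.)
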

\begin{proof}
W.l.o.g. suppose that $x_1(t)>x^*_1$ and $x_1(t)\geq\frac{1}{2}$. By Lemma \ref{lem:vi} and \ref{lem:pri}, we have
\begin{align*}
&C(x(t))\\
&\leq C(x^*)+\sum_{i=1}^2(\ell_i(x_i(t))-\ell_i(x^*_i))\cdot x^*_i+\sum_{i=1}^2 t_i(x^*_i-x_i(t)),\\
&\leq C(x^*)+\sum_{i=1}^2(\ell_i(x_i(t))-\ell_i(x^*_i))\cdot x^*_i\\
&-(\ell_1' (x_1(t))+\ell_2' (x_2(t)))\cdot(x_1(t)-x^*_1)\cdot (2x_1(t)-1),\\
&\leq C(x^*)+(\ell_1(x_1(t))-\ell_1(x^*_1))\cdot x^*_1-\ell_1' (x_1(t))\cdot(x_1(t)-x^*_1)\cdot (2x_1(t)-1),\\
&\leq C(x^*)+(\ell_1(x_1(t))-\ell_1(x^*_1))\cdot (x^*_1+1-2x_1(t)).
\end{align*}
where the second inequality follows from Lemma \ref{lem:pri}, $x_2(t)\leq x^*_2$ and $\ell_2$ increasing, the third from $\ell_2'(x_2(t))\geq0$ and the fourth from convexity of $\ell_1$, $\ell_1' (x_1(t))\cdot(x_1(t)-x^*_1)\geq \ell_1(x_1(t))-\ell_1(x^*_1)$. Since
$$\frac{(\ell_1(x_1(t))-\ell_1(x^*_1))\cdot (x^*_1+1-2x_1(t))}{\ell_1(x_1(t))}\leq\mu_2(\mathcal{L}_c),$$
and, by Lemma \ref{lem:pri}, $t_1\geq t_2$, we have $\ell_1(x_1(t))\leq C(x(t))$ and the lemma follows.
\end{proof}

\begin{proof}[Proof of Theorem \ref{thm:main1}]
Let $t\in \T(\infty)$ and $x_i(t)>0$ for all $i=1,2$. Suppose that either $x_1(t)\leq x^*_1<x_1(0)$, or $x_1(0)<x^*_1<x_1(t)$. By the Theorem of the Maximum (Berge~\cite{Be63}), the profit function of each firm is continuous in $c$. Given that the profit function of each firm is continuous in the toll vector (Hayrapetyan et al.~\cite{HaTaWe07}), and the toll vector continuously changes the induced flow (Beckmann et al.~\cite{BeMcWi56}), there exists a price cap $c$ such that $x_1(c)=x^*_1$. So, we can assume that $x^*_1<x_1(0)$ and $x^*_1<x_1(t)$.

Suppose that $x_1(t)\leq\frac{1}{2}$. Then by Lemma \ref{lem:2eq}, $x_1(0)\leq x_1(t)\leq\frac{1}{2}$. Since $\mu_1(\L_d)\leq \frac{d}{(d+1)^{(d+1)/d}}$ (Correa et al.~\cite{CorSS04}) and $\mu_1(\L_c)\leq 1$, the result follows by Lemma \ref{lem:we}.

Suppose that $x_1(t)\geq\frac{1}{2}$. Then
\begin{align*}
\mu_2(\L_d)&=\sup_{x_i(t)\geq\frac{1}{2},0\leq x^*_i\leq x_i(t)}\frac{\left(\ell_1(x_1(t))-\ell_1(x^*_1)\right)\cdot(x^*_1+1-2x_1(t))}{\ell_1(x_1(t))}\\
&\leq\frac{d}{2(d+1)^{(d+1)/d}},
\end{align*}
and
\begin{align*}
\mu_2(\L_c)&=\sup_{x_i(t)\geq\frac{1}{2},0\leq x^*_i\leq x_i(t)}\frac{\left(\ell_1(x_1(t))-\ell_1(x^*_1)\right)\cdot(x^*_1+1-2x_1(t))}{\ell_1(x_1(t))}\\
&\leq\sup_{\frac{1}{2}\leq x_1(t)} 1-x_1(t)\leq\frac{1}{2},
\end{align*}
and the result follows by Lemma \ref{lem:oe}.
\end{proof}

The following example shows that the bound of $8/7$ in Theorem \ref{thm:main1} for affine latencies is tight for duopolies, and is a lower bound for arbitrary parallel graphs.

\begin{example}
Consider the network of Figure \ref{fig:aff} with $n\geq2$.
\begin{figure}[ht]
\centering
\begin{tikzpicture}
\node[draw,circle,scale=1.5,fill=green!20] (1) at (0,0) {$s$};
\node[draw,circle,scale=1.4,fill=red!20] (2) at (4,0) {$t$};
\node at (2,1/3) {$\vdots$};
\draw[-stealth] (1)[out=30,in=150] to node[above]{$\ell_1(x_1)=x_1$} (2);
\draw[-stealth] (1)[out=-30,in=-150] to node[above=5pt,scale=0.8]{$\ell_{n-1}(x_{n-1})=x_{n-1}$} (2);
\draw[-stealth] (1)[out=-45,in=-135] to node[below]{$\ell_n(x_n)=\frac{1}{2(n-1)}$} (2);
\end{tikzpicture}
\caption{The bound for affine latencies is tight.}
\label{fig:aff}
\end{figure}
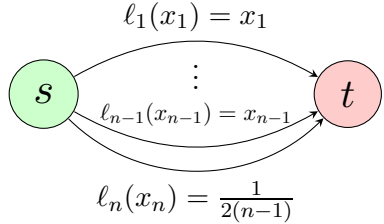

Let $t\in\T(\infty)$. By Lemma \ref{lem:pri}, $t=x(t)=x(0)=\left(\frac{1}{2(n-1)},\ldots,\frac{1}{2(n-1)},\frac{1}{2}\right)$ and $x^*=\left(\frac{1}{4(n-1)},\ldots,\frac{1}{4(n-1)},\frac{3}{4}\right)$. 

If $0\leq c\leq\frac{1}{2(n-1)}$, then we have $t_1=\ldots=t_n=c$ and thus $x(c)=\left(\frac{1}{2(n-1)},\ldots,\frac{1}{2(n-1)},\frac{1}{2}\right)$. If $c>\frac{1}{2(n-1)}$, then $t_1=\ldots=t_2=\frac{1}{2(n-1)}$ and thus $x(c)=\left(\frac{1}{2(n-1)},\ldots,\frac{1}{2(n-1)},\frac{1}{2}\right)$. Hence any price cap is optimal. So $$\min_{c\in\R_+}\frac{C(x(c))}{C(x^*)}=\frac{8}{7}.$$
\end{example}

The main result in Theorem \ref{thm:main1} assumes that a $c$-capped Nash equilibrium exists for all $c\in\R_+$. The following example shows that an uncapped Nash equilibrium need not exist for quadratic latency functions.

\begin{example}\label{exa:non}
Consider the network of Figure \ref{fig:non}.
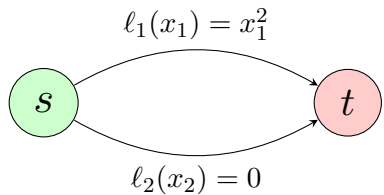
\begin{figure}[ht]
\centering
\begin{tikzpicture}
\node[draw,circle,scale=1.5,fill=green!20] (1) at (0,0) {$s$};
\node[draw,circle,scale=1.4,fill=red!20] (2) at (4,0) {$t$};
\draw[-stealth] (1)[out=30,in=150] to node[above]{$\ell_1(x_1)=x_1^2$} (2);
\draw[-stealth] (1)[out=-30,in=-150] to node[below]{$\ell_2(x_2)=0$} (2);
\end{tikzpicture}
\caption{No Nash equilibrium.}
\label{fig:non}
\end{figure}

Then
\begin{align*}
B^{\infty}_1(t_2)=\begin{cases}
t_1=\frac{2t_2}{3}, &\mbox{ if }0\leq t_2\leq 3,\\
t_1=t_2-1, &\mbox{ if }t_2>3,
\end{cases}
\end{align*}
and
\begin{align*}
B^{\infty}_2(t_1)=\begin{cases}
t_2=\argmax\limits_{t_2}\left(1-(t_2-t_1)^{1/2}\right)\cdot t_2, &\mbox{ if }0\leq t_1\leq\frac{1}{4},\\
t_2=t_1, &\mbox{ if }t_1\geq\frac{1}{4}.
\end{cases}
\end{align*}
Observe that $B^{\infty}_2(t_1)$ is not convex at $t_1=\frac{1}{4} $. Combining $B^{\infty}_1(t_2)$ and $B^{\infty}_2(t_1)$ (see Figure \ref{fig:br2}) implies that there is no uncapped Nash equilibrium. In fact, the result is even stronger: there is no $c$-capped Nash equilibrium whenever $c\geq\frac{1}{2}$.
\begin{figure}[ht]
\centering
\begin{tikzpicture}
\draw[very thin,color=gray] (-0.1,-0.1) grid (4.1,4.1);
\node[below left] at (0,0){0};
\node[below] at (2,0){$\frac{1}{4}$};
\node[below] at (4,0){$\frac{1}{2}$};
\node[left] at (0,2){$\frac{1}{4}$};
\node[left] at (0,4){$\frac{1}{2}$};
\draw[->] (-0.2,0) -- (4.2,0) node[right] {$t_1$};
\draw[->] (0,-0.2) -- (0,4.2) node[above] {$t_2$};
\draw[color=red,dashed] (2,2) -- (4,4);
\draw[color=red,dashed] (0,32/9) to[out=25,in=185] (2,4);
\draw[color=blue] (0,0) -- (8/3,4);
\end{tikzpicture}
\caption{Best reply correspondence: in blue for player 1, and in (dashed) red for player 2.}
\label{fig:br2}
\end{figure}
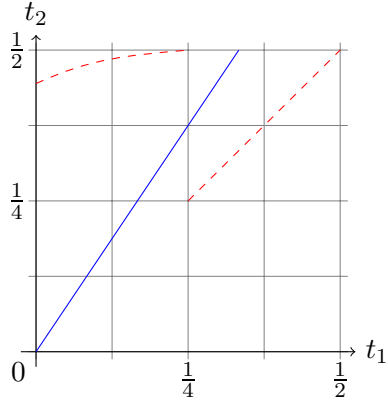
\end{example}
\begin{remark}
The nonexistence result in Example \ref{exa:non} violates the assumption that latency functions are strictly increasing. Nevertheless, we can change the latency function of link 2 into $\ell_2(x_2)=a_2\cdot x_2$ with $a_2<\sqrt{17}-4$ and obtain the same result: an uncapped Nash equilibrium need not exist.
\end{remark}

The next and final example shows that the latency costs at the optimal price cap can be arbitrarily worse than the optimal latency costs due to nonexistence of an uncapped Nash equilibrium.
\begin{example} \label{ex:bad}
Consider the network of Figure \ref{fig:poly}, where $d\geq3$.
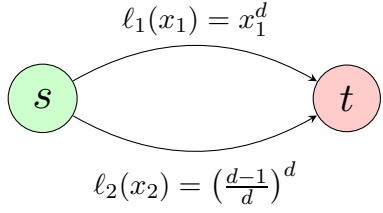
\begin{figure}[ht]
\centering
\begin{tikzpicture}
\node[draw,circle,scale=1.5,fill=green!20] (1) at (0,0) {$s$};
\node[draw,circle,scale=1.4,fill=red!20] (2) at (4,0) {$t$};
\draw[-stealth] (1)[out=30,in=150] to node[above]{$\ell_1(x_1)=x_1^d$} (2);
\draw[-stealth] (1)[out=-30,in=-150] to node[below]{$\ell_2(x_2)=\left(\frac{d-1}{d}\right)^d$} (2);
\end{tikzpicture}
\caption{No uncapped Nash equilibrium if $d\geq 3$.}
\label{fig:poly}
\end{figure}

Then
\begin{align*}
B^{\infty}_1(t_2)=\begin{cases}
t_1=\frac{d\cdot(t_2+\left(\frac{d-1}{d}\right)^d)}{d+1}, &\mbox{ if }0\leq t_2\leq d+1-\left(\frac{d-1}{d}\right)^d,\\
t_1=t_2+\left(\frac{d-1}{d}\right)^d-1, &\mbox{ if }t_2>d+1-\left(\frac{d-1}{d}\right)^d,
\end{cases}
\end{align*}
and
\begin{align*}
B^{\infty}_2(t_1)=\begin{cases}
t_2=\argmax\limits_{t_2}\left(1-(t_2-t_1+b_2)^{1/d}\right)\cdot t_2, &\mbox{ if }0\leq t_1\leq\frac{d\cdot(d-1)^{d-1}}{d^d},\\
t_2=t_1-\left(\frac{d-1}{d}\right)^d, &\mbox{ if }t_1\geq\frac{d\cdot(d-1)^{d-1}}{d^d}.
\end{cases}
\end{align*}
Combining $B^{\infty}_1(t_2)$ and $B^{\infty}_2(t_1)$ implies that there is only an equilibrium for all price caps $c\leq\left(\frac{d-1}{d}\right)^{d-1}$. If $c\leq\left(\frac{d-1}{d}\right)^{d-1}$, then $t_1(c)=t_2(c)=c$. Similarly as in Example \ref{exa:non}, an important reason for the nonexistence of equilibria seems to be the nonconvexity of the set of best replies for player 2 at $t_1=\left(\frac{d-1}{d}\right)^{d-1}$. Since $x(c)=x(0)=\left(\frac{d-1}{d},\frac{1}{d}\right)$ for all $c\leq\left(\frac{d-1}{d}\right)^{d-1}$, and $x^*=\left(\frac{d-1}{d\cdot(d+1)^{1/d}},1-\frac{d-1}{d\cdot(d+1)^{1/d}}\right)$, we have
$$\rho(\L_d)\geq\min_{0\leq c\leq \left(\frac{d-1}{d}\right)^{d-1}}\frac{C(x(c))}{C(x^*)}=\frac{(d+1)^{(d+1)/d}}{(d+1)^{(d+1)/d}-(d-1)}\rightarrow\infty\mbox{ as }d\rightarrow\infty.$$
\end{example}

\section{Discussion}
We consider a network pricing game in which, in the first stage, edge owners set prices so as to maximize profit, and, in the second stage, users choose paths that minimize their total costs. The problem with these games is that Nash equilibria might not exist, and if they exist, they can induce arbitrarily inefficient flows. We therefore allow for competition regulation and consider a (uniform) price cap regulation policy. Our main goal is, firstly, to find a price cap that minimizes the inefficiency of the induced flow, and, secondly, to quantify the loss in efficiency due to competition even in the presence of competition regulation. Our main results are the following. For parallel link networks with affine latency functions and a full support Wardrop equilibrium, we provide an algorithm that finds the optimal price cap in polynomial time. Due to multiplicty of Nash equilibria, the algorithm is not valid for instances without a full support Wardrop equilibrium. Then we show that the ratio between the congestion costs at an optimal price cap and the optimal congestion costs are at most 2 for duopoly instances with an uncapped Nash equilibrium. This bound lowers down to $8/7$ for affine latency functions. However, due to the nonexistence of Nash equilibria, we are able to construct a sequence of instances such that the performance of the induced flow at an optimal price cap is arbitrarily bad.

The following questions remain open. First, is there a (polynomial) algorithm that finds the optimal price cap for more general instances? In particular, for parallel link networks with affine latency functions that have no full support Wardrop equilibrium. Secondly, we have provided an instance that shows that the flow at an optimal price cap can be $8/7$ times as costly as the optimal flow. Can the upper bound in Theorem \ref{thm:main1} be extended from two link networks to arbitrary parallel link networks with affine latencies? Thirdly, for two link networks with affine latency functions, a worst-case guarantee of $8/7$ also holds for the simple algorithm that selects the best flow from the Wardrop flow and the uncapped Nash equilibrium flow. How does this simple algorithm perform for arbitrary parallel link networks? Fourthly, how do results change if we assume a different user behavior, like elastic users (Chau and Sim~\cite{ChSi03}, Hayrapetyan et al.~\cite{HaTaWe07} or Ozdaglar~\cite{Oz08}), atomic splittable users (Haurie and Marcotte~\cite{ha85}, Orda et al.~\cite{Or93} or Cominetti et al.~\cite{Co09}), or stochastic users (Daganzo and Sheffi~\cite{DaSh77} or Guo et al.~\cite{GuYaLi10})?

\section*{References}

\bibliographystyle{plain}
\bibliography{mybibfile}

\end{document}